\newcommand{\bfx}{\boldsymbol{x}}
\newcommand{\bfw}{\boldsymbol{w}}
\newcommand{\calG}{\mathcal{G}}
\newcommand{\calM}{\mathcal{M}}
\newcommand{\calS}{\mathcal{S}}
\newcommand{\bbR}{\mathbb{R}}
\newcommand{\opt}{\mbox{\sf opt}}
\newcommand{\LP}{\mbox{\sf LP}}
\newcommand{\poly}{\mathrm{poly}}
\newcommand{\out}{\mathrm{out}}
\newcommand{\tw}{\mathrm{tw}}
\theoremstyle{plain}
	  \newtheorem{theorem}{Theorem}
	  \newtheorem{corollary}{Corollary}
	  \newtheorem{lemma}{Lemma}
\theoremstyle{definition}
	  \newtheorem{definition}{Definition}
	  \newtheorem{example}{Example}
\theoremstyle{remark}
\newcommand*{\citet}[1]{\citeauthor{#1} \shortcite{#1}}
\begin{document}
% The file aaai.sty is the style file for AAAI Press 
% proceedings, working notes, and technical reports.
%
\title{Cooperative Games with Bounded Dependency Degree}
\author{
Ayumi Igarashi \\ University of Oxford\\ Oxford, UK \\ ayumi.igarashi@cs.ox.ac.uk \And
Rani Izsak \\ Weizmann Institute of Science\\ Rehovot, Israel \\ ran.izsak@weizmann.ac.il \And
Edith Elkind \\ University of Oxford\\ Oxford, UK \\ edith.elkind@cs.ox.ac.uk
}
\maketitle
\begin{abstract}
Cooperative games provide a framework to study cooperation among self-interested agents.  
They offer a number of solution concepts describing how the outcome of the cooperation  
should be shared among the players. Unfortunately, computational problems 
associated with many of these solution concepts tend to be 
intractable---NP-hard or worse. In this paper, we incorporate complexity measures recently proposed by 
\citet{Feige2013}, called {\em dependency degree} and {\em supermodular degree}, 
into the complexity analysis of cooperative games. 
We show that many computational problems for cooperative games become tractable for 
games whose dependency degree or supermodular degree are bounded. In particular, we prove that simple games 
admit efficient algorithms for various solution concepts when the supermodular degree is small; further, we show 
that computing the Shapley value is always in FPT with respect to the dependency degree. Finally, we note that, 
while determining the dependency among players is computationally hard, there are efficient algorithms for 
special classes of games.
\end{abstract}

\section{Introduction}
Cooperative games provide a convenient framework to study cooperation among self-interested agents. 
Formally, a {\it cooperative transferable utility game}, or simply a {\it game}, is a pair $(N,v)$ where 
$N=\{1,2,\ldots,n\}$ is a finite set of {\it players} and $v :2^N \rightarrow \bbR$ is a \textit{characteristic 
function}. We are interested in how players should divide the value $v(N)$ of the {\em grand coalition} $N$. 
To capture the idea of a stable or fair payoff division scheme, a number of 
solution concepts have been developed, such as the core and the Shapley value. Unfortunately, the 
computational problems associated with many of these solution concepts are often intractable---NP-hard or worse.

There are two ways to circumvent computational intractability in this context. 
The first approach is to identify interesting classes of games that admit efficient algorithms. 
For instance, it is well-known that when the characteristic function $v$ is convex, 
an outcome in the core can be computed by a polynomial-time algorithm.
However, this algorithm offers no guarantees when the input game is not convex, 
even if the convexity constraint is only violated at a few points,
and hence this approach is of limited value. 
A more flexible way to cope with hardness is to 
provide complexity guarantees for 
all instances so that the 
guarantee depends on the complexity of the instance, that is, to design algorithms whose
running time depends on how well the input instance is structured. 
In the context of cooperative games, this approach has been pursued by \citet{Ieong2005},
who propose a representation formalism, which they call {\em marginal contribution nets (MC-nets)}, 
and design an algorithm for computing an allocation in the core
whose running time depends on the treewidth of the graph associated with the MC-net representation
of the input game.

In this work, we explore the power of the latter approach 
for two measures of structural complexity of set functions
that have been recently developed by \citet{Feige2013}: 
the {\em dependency degree} and the {\em supermodular degree}. 
Intuitively, the complexity of a set function $v$ is measured 
using the notion of {\em dependency} among players at $v$; 
such dependencies induce a graph describing the relation between players---the {\em dependency graph}. 
A player's dependency degree is her degree in this graph; 
her supermodular degree is her degree in a modified version of this graph,
which only takes certain dependencies into account.

\smallskip

\noindent {\bf Our contribution.\ }
We argue that both the dependency degree and the supermodular degree
are useful in the context of cooperative games, both analytically and computationally. 

We show that several cooperative game theory concepts can be naturally interpreted in terms of a dependency 
graph. In particular, a players' dependency degree reflects on her importance in the game: 
in a simple game, a dummy player is an isolated vertex of the dependency graph and 
a veto player is a vertex with the maximum degree. 
We can also relate properties of a game, such as its dimension, 
to the properties of its dependency graph: 
for instance, we show that dependency graphs of weighted voting games 
are clique-trees, namely, chordal graphs.

We then investigate which solution concepts in 
cooperative games can be computed efficiently if the dependency/supermodular degree is bounded. 
For simple games, we obtain a number of tractability results
with respect to the supermodular degree. Specifically, we prove that simple 
games admit an efficient algorithm for computing an element of the core or the least core when 
the supermodular degree is small. Further, while finding an optimal coalition structure is computationally 
intractable even for weighted voting games, we show that this problem becomes tractable for weighted voting 
games with small supermodular degree.
However, these results do not extend to general games: we prove that the associated separation problem for the least core is 
NP-hard even for games with constant dependency degree. On the other hand, we show that computing the Shapley value 
and the Banzhaf value is in FPT with respect to the dependency degree.

We also consider the problem of computing the dependency degree and the supermodular degree 
given various representations of a game. While intractability turns out to be inevitable 
in general, we provide polynomial and pseudo-polynomial algorithms for special
classes of games.

\smallskip

\noindent {\bf Related work.\ }
Computational aspects of cooperative games have received a considerable amount of
attention over the last few decades; we refer the reader to the book of \citet{Chalkiadakis2011}.

Our work is similar in spirit to the complexity study of induced subgraph games or, more broadly, 
games defined by MC-nets \cite{Deng1994,Ieong2005,Greco2011,Greco2014,LiCo14}.
Indeed, each MC-net induces an {\em agent graph}, which also aims
to capture dependencies among agents. However agent graphs are defined in a purely syntactic manner, 
by looking at agents' co-occurrences in the rules of an MC-net, whereas the dependency graph 
is defined semantically, i.e., in terms of the value of the characteristic function.
Moreover, both the dependency degree and the supermodular degree are different from the concepts
that are usually used to measure the complexity of an agent graph (such as treewidth).

%Myerson graph
There are also similarities between our model and {\em Myerson games} \cite{Myerson1977,Chalkiadakis2016,Meir2013,Igarashi2017} 
where players are located on a graph and coalitions are only allowed to form if they are connected in this graph. 
However, in Myerson games non-adjacent agents may still depend on each other, and hence an agent's dependency degree
may be high even if her degree in the underlying Myerson graph is small. As a consequence, 
some problems that are easy for games with small supermodular degree remain hard for 
games on bounded-degree graphs, even if these graphs are acyclic \cite{Igarashi2017}.
Our results for coalition structure generation (Section~\ref{sec:csg}) are similar in spirit to those
of \citet{Voice2012}; we discuss the relationship between their results and ours
in Section~\ref{sec:csg}.

The dependency degree and the supermodular degree have been introduced by~\citet{Feige2013},
who showed applications of these measures to the welfare maximization problem.
\citet{Feldman2014} generalized these results to function maximization subject to $k$-extendible system constraints 
(a generalization of the intersection of $k$ matroids). These concepts have also been
applied in an online setting \cite{Feldman2017}, and in the context of efficiency of auctions \cite{FFMR16}, 
optimization of SDN upgrades \cite{Poularakis2017} and committee selection \cite{Izsak2017}.
Some related complexity measures are the submodularity ratio \cite{DasKempe} and MPH~\cite{MPH}.

\begin{table*}[ht]
	\centering
	\begin{tabular}{lcccc}
		\toprule
		& General games & Simple games & Weighted voting games \\
	     	\midrule
Optimal coalition structure & NP-h. & NP-h. for $p=6$ (Th. \ref{NPh:optcoalition}) & FPT wrt p (Cor. \ref{FPT:optcoalition}) \\
	     	\midrule
Core &  & P & P\\
          	\midrule
Least Core &  & FPT wrt $p$ (Th. \ref{FPT:LC}) & FPT wrt $p$\\
          	\midrule
%Nucleolus &  & FPT wrt $p$ (Th. \ref{FPT:Nucleolus}) & FPT wrt $p$ \\
%		\midrule
Shapley &  FPT wrt $d$  (Th. \ref{FPT:Shapley}) &  FPT wrt $d$  &  FPT wrt $d$\\
          	\midrule
Banzhaf & FPT wrt $d$  (Th. \ref{FPT:Banzhaf}) &  FPT wrt $d$ &  FPT wrt $d$\\
		\bottomrule
	\end{tabular}
	\vspace{3pt}
	\caption{Overview of complexity results for computing various solution concepts 
		when parameterized by the dependency degree $d$ and the supermodular degree $p$. 
		Note that deciding the non-emptiness of the core for a simple game is straightforward: 
		the core is non-empty if and only if there is a veto player \cite{Chalkiadakis2011}.}
	\vspace{-5pt}
	\label{table}
\end{table*}

%%%%%%%%%%%%%%%%%%%%%%%%%%%%%%%%%%%%%%%%%%%%%%%%%%%%%%%%%%%%%%%%%%%%%%
\section{Preliminaries}
We start by defining basic notation and terminology of cooperative games.
Recall that a cooperative game is a pair $(N, v)$, where $N$ is a finite set
and $v$ is a function from $2^N$ to $\mathbb R$. Throughout the paper, we assume $v(\emptyset)=0$. 
For $s\in{\mathbb N}$, let $[s]=\{1,2,\ldots,s\}$. 
For a vector $\bfx \in \bbR^n$ and a subset $S\subseteq N$ 
we use the notation $x(S)= \sum_{i \in S}x_i$, with the convention that 
$x(\emptyset)=0$. The subsets of $N$ are referred to as {\em coalitions}.
An {\it imputation} for a game $(N,v)$ is a vector $\bfx \in \bbR^N$ 
satisfying {\em efficiency} : $x(N)=v(N)$, and {\em individual rationality} : $x_i \geq v(\{i\})$ for every $i \in N$.
For a player $i \in N$ and a coalition $S \subseteq N \setminus \{i\}$, 
we let $v(i|S)=v(S \cup \{i\})-v(S)$, that is,
$v(i|S)$ is the {\em marginal contribution} of $i$ to $S$. 
A player $i$ is called a {\em dummy} if she does not contribute to any coalition, 
i.e., $v(i|S)=0$ for every $S \subseteq N$. 

A set function $v:2^N\to{\mathbb R}$ is said to be {\em monotone} if for every pair of subsets
$S, T\subseteq N$ it holds that $S \subseteq T$ implies $v(S)\leq v(T)$.
A game $(N,v)$ is said to be {\em simple} if 
$v$ is monotone and only takes values in $\{0, 1\}$. 
In a simple game, coalitions of value $1$ are said to be {\em winning}, 
and coalitions of value $0$ are said to be {\em losing}. 
A winning coalition $S$ is said to be {\em minimal} if removal of any player from 
$S$ makes it losing, i.e., $S \setminus \{i\}$ is losing for every $i \in S$. 
A player $i$ is said to be a {\em veto player} if she is present in all winning coalitions, i.e., 
if a coalition $S$ is winning, then $i \in S$. A player $i$ is a {\em pivot} for a coalition 
$S \subseteq N \setminus \{i\}$ if $S$ is losing and $S \cup \{i\}$ is winning.

The {\em core} is a classic solution concept in cooperative games. 
Formally, the core of a game $(N,v)$ is the set of all imputations $\bfx$ 
such that no coalition has an incentive to defect from $\bfx$, i.e., 
$x(S) \geq v(S),~\mbox{for all}~S \subseteq N$. As the core can be empty, 
and, on the other hand, not all outcomes in the core are equally fair, 
we consider the {\em least core}, which can be thought of as the set of most stable outcomes. 
We first define the {\em excess} of a coalition $S \in 2^N \setminus \{N, \emptyset\}$ 
at an imputation $\bfx$ as $e(\bfx,S):=v(S)-x(S)$; intuitively, $e(\bfx, S)$ 
is the degree of unhappiness of $S$ at $\bfx$.
%Least core
The {\em least core} of a game $(N,v)$ is the set of all imputations $\bfx$ 
that minimize the maximum excess, i.e., the set of optimal solutions $\bfx$ 
of the following linear program: 
\begin{alignat*}{5}
(\LP_0)~ \min        & \quad  \varepsilon &\\
 \text{s.t.} & \quad  x(S) \geq v(S) - \varepsilon \quad \mbox{for all}~S \in 2^N \setminus \{N, \emptyset\}\\
& \quad x_i \geq v(\{i\}) \quad \mbox{for all}~i \in N\\
& \quad x(N)=v(N).
\end{alignat*}

%shapley value
We also consider solution concepts capturing fairness among players: the Shapley value and the Banzhaf value. 
The {\em Shapley value} $\phi_i(N,v)$ of a player $i\in N$ in a game $(N,v)$ 
is the average of  $i$'s marginal contributions at $v$ over all permutations of the players, that is,
\[
\phi_i(N,v)= \sum_{S \subseteq N \setminus \{i\}}\frac{|S|! (n-|S| -1)!}{n!}v(i|S).
\]
The {\em Banzhaf value} is the average of $i$'s marginal contributions at $v$ over all coalitions, that is, 
\[
\beta_i(N,v)=\frac{1}{2^{n-1}} \sum_{S \subseteq N \setminus \{i\}} v( i | S).
\]

\noindent {\bf Computational setting}
Throughout the paper, we only consider games $(N,v)$ such that $v$ is computable in time 
polynomial in $n$. Furthermore, `polynomial' always means polynomial in the number 
of players $n$. Note that the explicit representation of a game $(N, v)$, which lists the values
of all coalitions, is not polynomial in $n$; thus, for our computational results for general games
we assume oracle access to the characteristic function $v$.
We say that a problem is {\em fixed parameter tractable} (FPT) with respect to a 
parameter $k$ if each instance $I$ of this problem can be solved in time $f(k) \poly(|I|)$, 
where $f$ is a computable function that depends on $k$ only.

We omit some proofs due to space constraints; the omitted proofs
can be found in the full version of the paper \cite{IIE17}.
%%%%%%%%%%%%%%%%%%%%%%%%%%%%%%%%%%%%%%%%%%%%%%%%%%%%%%%%%%%%%%%%%%%%%%%

\section{Dependency Graphs of Cooperative Games}
In this section, we introduce complexity measures representing dependencies among players in a cooperative game 
\cite{Feige2013}, and study how well such parameters capture important concepts in cooperative games.

Given a game $(N, v)$ and two players $i, j\in N$, 
we say that player $i$ {\em positively depends on} player $j$ if there exists a coalition 
$S \subseteq N \setminus \{i,j\}$ such that $v(i|S \cup \{j\}) > v(i|S)$,
i.e., $i$ can contribute more to $S$ in the presence of $j$. 
We say that $i$ {\em depends on} $j$ if there exists a coalition $S \subseteq N \setminus \{i,j\}$ 
such that $v(i|S \cup \{j\}) \neq v(i|S)$, i.e., $i$'s contribution to $S$ depends on the presence of $j$.
These relations are known to be symmetric \cite{Feige2013}; hence, we can 
capture the dependency relations between players by undirected graphs. 
Formally, we define the {\em supermodular dependency graph} $G^+_v$ 
to be an undirected graph where the set of nodes is given by $N$ 
and the set of edges is given by the pairs of players positively depending on each other.
The {\em supermodular dependency set} of $i$ at $v$ is 
\[
D^{+}(i)=\{\, j \in N\setminus \{i\} \mid \mbox{$i$ positively depends on $j$} \,\}.
\]
The {\em supermodular degree} $p$ of $v:2^N \rightarrow \bbR$ 
is defined as the maximum size of the supermodular dependency set, i.e., $p=\max_{i \in N} |D^+(i)|$. 
We define the {\em dependency graph} $G_v$ to be an undirected graph 
where the set of nodes is given by $N$ and the set of edges is given by the pairs of players depending on each other. 
The {\em dependency set} of $i$ at $v$ is
\[
D(i)=\{\, j \in N \setminus \{i\}\mid  \mbox{$i$ depends on $j$}\,\}.
\]
The {\em dependency degree} $d$ of $v:2^N \rightarrow \bbR$ is defined as the maximum size of the dependency 
set, i.e., $d=\max_{i \in N} |D(i)|$.

\begin{example}\label{ex:dependency}
Consider a simple game $(N,v)$ with player set $N=[4]$ and a characteristic function $v:2^N \rightarrow \{0,1\}$ 
where the set of minimal winning coalitions is
$\{\{1,2\},\{2,3\},\{3,4\},\{1,4\}\}$.
It is easy to see that only such pairs have a positive dependence. Further, player $1$ depends on player $3$ 
since $1$ can make a positive marginal contribution to the coalition $\{2,4\}$,
but her contribution becomes zero in the presence of $3$. A similar argument applies to the pair $2, 4$. 
The resulting dependency graphs $G^+_v$ and $G_v$ are depicted in Figure~\ref{fig:dependency}. 
\begin{figure}[htb]
\begin{subfigure}[t]{0.25\textwidth}
\centering
\begin{tikzpicture}[scale=0.8, transform shape]
		\node[draw, circle](1) at (0,0) {$1$};
		\node[draw, circle](2) at (1.2,0) {$2$};
		\node[draw, circle](3) at (1.2,-1.2) {$3$};
		\node[draw, circle](4) at (0,-1.2) {$4$};
		
		\draw[-, >=latex,thick] (1)--(2); 
		\draw[-, >=latex,thick] (2)--(3); 
		\draw[-, >=latex,thick] (3)--(4);
		\draw[-, >=latex,thick] (1)--(4);
\end{tikzpicture}
	\caption{The graph $G^+_v$}
	\label{fig1}
\end{subfigure}%
\begin{subfigure}[t]{0.25\textwidth}
\centering
\begin{tikzpicture}[scale=0.8,transform shape]
		\node[draw, circle](1) at (0,0) {$1$};
		\node[draw, circle](2) at (1.2,0) {$2$};
		\node[draw, circle](3) at (1.2,-1.2) {$3$};
		\node[draw, circle](4) at (0,-1.2) {$4$};
		
		\draw[-, >=latex,thick] (1)--(2); 
		\draw[-, >=latex,thick] (2)--(3); 
		\draw[-, >=latex,thick] (3)--(4);
		\draw[-, >=latex,thick] (1)--(4);
		\draw[-, >=latex,thick] (2)--(4); 
		\draw[-, >=latex,thick] (1)--(3); 
\end{tikzpicture}
\caption{The graph $G_v$}
\label{fig2}
\end{subfigure}
\caption{A supermodular dependency graph and a dependency graph (Example~\ref{ex:dependency})}
\label{fig:dependency}
\end{figure}
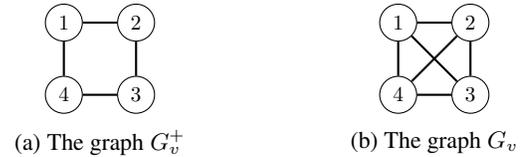
\end{example}

We will now show that the parameters defined above capture the importance of a player in the underlying game. 
The theorem below shows that a dummy player corresponds to an isolated node in the dependency graph.

\begin{theorem}
For every cooperative game $(N, v)$ and every player $i\in N$, the following statements are equivalent:
\begin{itemize}
\item[$(${\rm i}$)$] Player $i$ is a dummy player.
\item[$(${\rm ii}$)$] $v(\{i\})=0$ and $D(i)=\emptyset$.
\item[$(${\rm iii}$)$] $v(\{i\})=0$ and $D^{+}(i)=\emptyset$.
\end{itemize}
\end{theorem}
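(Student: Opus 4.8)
The plan is to prove the equivalence by establishing the cycle of implications $(\mathrm{i}) \Rightarrow (\mathrm{ii}) \Rightarrow (\mathrm{iii}) \Rightarrow (\mathrm{i})$, since $(\mathrm{ii}) \Rightarrow (\mathrm{iii})$ is the only nontrivial direction among these three and the rest are short. For $(\mathrm{i}) \Rightarrow (\mathrm{ii})$: if $i$ is a dummy, then $v(i \mid S) = 0$ for every $S$, so in particular $v(\{i\}) = v(i \mid \emptyset) = 0$; moreover, for any $j$ and any $S \subseteq N \setminus \{i,j\}$ we have $v(i \mid S) = 0 = v(i \mid S \cup \{j\})$, so no $j$ witnesses a dependency and $D(i) = \emptyset$. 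For $(\mathrm{iii}) \Rightarrow (\mathrm{i})$: I want to show $v(i \mid S) = 0$ for all $S \subseteq N \setminus \{i\}$, and since $v(\{i\}) = 0$ gives the base case $S = \emptyset$, I would induct on $|S|$. Given $S \neq \emptyset$, pick any $j \in S$ and write $S = S' \cup \{j\}$ with $|S'| < |S|$; by the induction hypothesis $v(i \mid S') = 0$, and because $D^{+}(i) = \emptyset$ we know $i$ does not positively depend on $j$, i.e. $v(i \mid S' \cup \{j\}) \le v(i \mid S') = 0$. This yields $v(i \mid S) \le 0$, so I still need the reverse inequality $v(i \mid S) \ge 0$.

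The reverse inequality is where the only real subtlety lies, and it is the step I expect to be the main obstacle. The point is that "$i$ does not positively depend on any player" bounds marginal contributions only from above relative to smaller coalitions, so a priori $v(i \mid S)$ could be negative. To handle this I would run the induction more carefully: I claim that for every $S \subseteq N \setminus \{i\}$ one has $v(i \mid S) \le 0$, proved exactly as above, and separately that the minimum of $v(i \mid S)$ over all $S$ is attained and must be $0$. Concretely, let $S^{*}$ be a coalition minimizing $v(i \mid S)$. If $S^{*} \neq \emptyset$, pick $j \in S^{*}$; since $i$ does not positively depend on $j$, $v(i \mid S^{*}) \le v(i \mid S^{*} \setminus \{j\})$, so $S^{*} \setminus \{j\}$ is also a minimizer, and iterating we conclude $\emptyset$ is a minimizer, whence $v(i \mid S) \ge v(i \mid \emptyset) = v(\{i\}) = 0$ for all $S$. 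Combined with $v(i \mid S) \le 0$ this gives $v(i \mid S) = 0$ for all $S$, i.e. $i$ is a dummy.

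Finally, to close the cycle I need $(\mathrm{ii}) \Rightarrow (\mathrm{iii})$, which is immediate: $D^{+}(i) \subseteq D(i)$ always holds, because a positive-dependency witness $v(i \mid S \cup \{j\}) > v(i \mid S)$ is in particular a dependency witness $v(i \mid S \cup \{j\}) \neq v(i \mid S)$; hence $D(i) = \emptyset$ forces $D^{+}(i) = \emptyset$, and the condition $v(\{i\}) = 0$ carries over verbatim. Note that the symmetry of the dependency relations stated earlier is not actually needed here, since all the arguments are phrased in terms of $i$'s own marginal contributions; the only facts used are the definitions of dummy player, $D(i)$, and $D^{+}(i)$, together with the trivial inclusion $D^{+}(i) \subseteq D(i)$.
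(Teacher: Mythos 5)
Your overall plan matches the paper's: the same cycle $(\mathrm{i})\Rightarrow(\mathrm{ii})\Rightarrow(\mathrm{iii})\Rightarrow(\mathrm{i})$, with $(\mathrm{i})\Rightarrow(\mathrm{ii})$ read off the definitions and $(\mathrm{ii})\Rightarrow(\mathrm{iii})$ from $D^{+}(i)\subseteq D(i)$; those parts are fine, and your induction giving $v(i|S)\le 0$ for all $S\subseteq N\setminus\{i\}$ is also sound once the induction hypothesis is stated as $v(i|S')\le 0$ rather than $v(i|S')=0$. The genuine gap is exactly where you feared, and your proposed fix does not close it: from $D^{+}(i)=\emptyset$ you get $v(i|S^{*})\le v(i|S^{*}\setminus\{j\})$, i.e.\ deleting $j$ can only \emph{increase} $i$'s marginal contribution. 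That inequality does not make $S^{*}\setminus\{j\}$ a minimizer --- for that you would need $v(i|S^{*}\setminus\{j\})\le v(i|S^{*})$, the opposite direction. Iterating the inequality you actually have only shows that $\emptyset$ is a \emph{maximizer} of $S\mapsto v(i|S)$, which re-derives $v(i|S)\le 0$ and yields no lower bound at all.

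Moreover, no argument can supply the missing inequality at the stated level of generality, because $(\mathrm{iii})\Rightarrow(\mathrm{i})$ fails for non-monotone $v$: take $N=\{1,2\}$ with $v(\emptyset)=v(\{1\})=0$, $v(\{2\})=5$, $v(\{1,2\})=3$. Then $v(\{1\})=0$ and $D^{+}(1)=\emptyset$ (the only candidate witness is $S=\emptyset$, and $v(1|\{2\})=-2\not>0=v(1|\emptyset)$), yet player $1$ is not a dummy since $v(1|\{2\})\neq 0$. So the lower bound genuinely requires nonnegative marginal contributions (e.g.\ monotonicity of $v$), and under that assumption your first induction alone already finishes the proof, since $v(i|S)\ge 0$ is then automatic. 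For comparison, the paper argues $(\mathrm{iii})\Rightarrow(\mathrm{i})$ by taking an inclusion-minimal $S^{*}$ with $v(i|S^{*})\neq 0$, observing $v(i|S^{*}\setminus\{j\})=0$ for $j\in S^{*}$, and concluding that $i$ positively depends on $j$; that last step tacitly reads ``$\neq 0$'' as ``$>0$'', i.e.\ it relies on the same monotonicity-type assumption (harmless in the monotone/simple games where the result is later used, but it is the same subtle point your attempt ran into, and it deserves to be made explicit rather than patched with the minimizer argument).
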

\begin{proof}
$(${\rm i}$)\Longrightarrow(${\rm ii}$)$: 
Suppose that $i$ is a dummy player. Then we have $v(\{i\})=0$. Now, suppose that there exists a 
player $j \in D(i)$. Then there is a coalition $S$ such that at least one of $v(i|S)$ and $v(i|S \cup \{j\})$ is 
non-zero, contradicting the fact that $i$ is a dummy player.

\noindent
$(${\rm ii}$)\Longrightarrow(${\rm iii}$)$: This direction holds by the definition.

\noindent
$(${\rm iii}$)\Longrightarrow(${\rm i}$)$: 
Suppose that $v(\{i\})=0$ and $D^{+}(i)=\emptyset$. 
Assume towards a contradiction that $v(i|S) \neq 0$ for some coalition $S \subseteq N \setminus \{i\}$.  
Let $S^*$ be a minimal coalition with respect to this property. If $S^* \neq \emptyset$, then there is a player 
$j \in S^*$ such that $v(i| S^* \setminus \{j\}) =0$, which means that $i$ can contribute more in the 
presence of $j$, and hence $i$ positively depends on $j$, a contradiction. If $S^* = \emptyset$, it follows 
that $v(\{i\}) \neq 0$, a contradiction again.
\end{proof}

%%%%%%%%%%%%%%%%%%%%%%%%%%%%%%%%%%%%%%%%%%

\subsection{Positive Dependence in Simple Games}
We will now further investigate the structure of dependency graphs in simple games. It turns out that such 
games can be almost fully characterized by the positive dependence relation. We first observe that in simple 
games the dependency relation admits a natural interpretation.
\begin{lemma}\label{lem:simple:positive}
Consider a simple game $(N, v)$ and two players $i, j\in N$.
Player $i$ positively depends on player $j$ if and only if 
there exists a coalition $S \subseteq N \setminus \{i,j\}$ such that 
$S \cup \{i,j\}$ is winning, but the coalitions $S \cup \{i\}$ and $S \cup \{j\}$ are losing.
\end{lemma}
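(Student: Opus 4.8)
The plan is to simply unfold the definition of positive dependence and exploit the fact that in a simple game every marginal contribution lies in $\{0,1\}$. Recall that $i$ positively depends on $j$ precisely when there is a coalition $S \subseteq N \setminus \{i,j\}$ with $v(i \mid S \cup \{j\}) > v(i \mid S)$. Since $v$ is monotone and takes values only in $\{0,1\}$, each of these marginal contributions lies in $\{0,1\}$, so the strict inequality is equivalent to the conjunction $v(i \mid S \cup \{j\}) = 1$ and $v(i \mid S) = 0$. Both directions of the lemma then amount to translating these two equalities into statements about winning and losing coalitions.

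For the forward direction, I would first rewrite $v(i \mid S \cup \{j\}) = 1$ as $v(S \cup \{i,j\}) = 1$ and $v(S \cup \{j\}) = 0$, that is, $S \cup \{i,j\}$ is winning and $S \cup \{j\}$ is losing. It then remains to deduce that $S \cup \{i\}$ is also losing. This is the one place where a (tiny) argument is needed: from $v(i \mid S) = 0$ we get $v(S \cup \{i\}) = v(S)$, and monotonicity together with $v(S \cup \{j\}) = 0$ forces $v(S) \le v(S \cup \{j\}) = 0$, hence $v(S) = v(S \cup \{i\}) = 0$, so $S \cup \{i\}$ is losing, as required.

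For the converse, suppose $S \subseteq N \setminus \{i,j\}$ satisfies the stated conditions. Then $v(i \mid S \cup \{j\}) = v(S \cup \{i,j\}) - v(S \cup \{j\}) = 1 - 0 = 1$, while monotonicity gives $v(S) \le v(S \cup \{i\}) = 0$, so that $v(i \mid S) = v(S \cup \{i\}) - v(S) = 0$. Hence $v(i \mid S \cup \{j\}) = 1 > 0 = v(i \mid S)$, and $i$ positively depends on $j$.

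I do not expect any real obstacle here; the proof is essentially a case analysis made trivial by the $\{0,1\}$-valuedness of $v$. The only subtlety worth flagging is that $v(i \mid S) = 0$ does not by itself certify that $S \cup \{i\}$ is losing (a priori one could have $v(S) = v(S \cup \{i\}) = 1$), so one must combine it with monotonicity and the fact that $S \cup \{j\}$ is losing to pin down that $v(S) = 0$.
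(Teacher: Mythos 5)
Your proof is correct: the paper omits the proof of this lemma (deferring it to the full version), and your argument is exactly the natural one---unfolding the definition of positive dependence, using that marginal contributions in a simple game lie in $\{0,1\}$, and invoking monotonicity to conclude $v(S)=0$ from $v(S\cup\{j\})=0$. You also correctly identify the only non-trivial step, namely that $v(i\mid S)=0$ alone does not certify that $S\cup\{i\}$ is losing.
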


Now, in contrast with dummy players, a veto player in a simple game 
is adjacent to every non-dummy player in the dependency graph.

\begin{theorem}
Let $i$ be a veto player in a simple game $(N, v)$. 
Then $i$ positively depends on each non-dummy player $j \in N \setminus \{i\}$.
\end{theorem}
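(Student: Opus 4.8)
The plan is to use the characterization of positive dependence in simple games provided by Lemma~\ref{lem:simple:positive}: player $i$ positively depends on player $j$ exactly when there is a coalition $S \subseteq N \setminus \{i,j\}$ for which $S \cup \{i,j\}$ is winning while $S \cup \{i\}$ and $S \cup \{j\}$ are both losing. So it suffices, given a veto player $i$ and a non-dummy player $j \ne i$, to exhibit such a witness coalition $S$.

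First I would use the fact that $j$ is not a dummy to find a coalition $T \subseteq N \setminus \{j\}$ with $v(j \mid T) \ne 0$; since the game is simple and monotone, this forces $v(j \mid T) = 1$, i.e.\ $T$ is losing and $T \cup \{j\}$ is winning. Next I would bring in the veto player. Since $i$ is a veto player and $T \cup \{j\}$ is winning, we must have $i \in T \cup \{j\}$, and as $i \ne j$ this gives $i \in T$. Write $S := T \setminus \{i\}$, so that $S \subseteq N \setminus \{i,j\}$ and $T = S \cup \{i\}$. Then $S \cup \{i\} = T$ is losing, and $S \cup \{i,j\} = T \cup \{j\}$ is winning --- two of the three conditions of Lemma~\ref{lem:simple:positive} are already satisfied.

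The remaining step, and the one requiring a small argument, is to check that $S \cup \{j\}$ is losing. This is immediate from the veto property: $i \notin S \cup \{j\}$ (because $i \notin S$ and $i \ne j$), and a coalition not containing the veto player $i$ cannot be winning, hence $S \cup \{j\}$ is losing. With all three conditions verified, Lemma~\ref{lem:simple:positive} yields that $i$ positively depends on $j$, completing the proof.

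The main obstacle is essentially just making sure the witness coalition produced from the non-dummy condition can be massaged into one avoiding both $i$ and $j$; the veto property makes this painless, since it simultaneously guarantees $i$ lies in the chosen winning coalition (so we can pull it out to form $S$) and guarantees that any coalition omitting $i$ is losing (so the third condition comes for free). No case analysis beyond this is needed.
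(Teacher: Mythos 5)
Your proof is correct and follows essentially the same route as the paper's: obtain a pivot coalition for the non-dummy player $j$, observe that the veto player $i$ must belong to it, remove $i$ to form the witness coalition, and use the veto property again to see that the coalition without $i$ is losing, then invoke Lemma~\ref{lem:simple:positive}. The only cosmetic difference is that you derive the pivot coalition explicitly from the definition of a dummy via monotonicity, which the paper states directly.
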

\begin{proof}
Take any non-dummy player $j \in N \setminus \{i\}$. Then $j$ is a pivot for some coalition 
$S \subseteq N \setminus \{j\}$, i.e., $S \cup \{j\}$ is winning and $S$ is losing. 
Since $i$ is a veto player and $i \neq j$, we have $i \in S$. 
Let $S_{-i}=S \setminus \{i\}$. Then, the coalition $S_{-i} \cup \{i,j\}$ 
is winning, and coalition $S_{-i}\cup \{i\}$ is losing. 
Also, as $i \not \in S_{-i}\cup \{j\}$, coalition $S_{-i}\cup \{j\}$ is losing. 
By Lemma~\ref{lem:simple:positive}, we conclude that $i$ positively 
depends on $j$.
\end{proof}

We also observe that all minimal winning coalitions correspond to cliques in the supermodular dependency graph.

\begin{theorem}\label{thm:MWC}
In a simple game $(N, v)$ player $i\in N$ positively depends on 
a player $j\in N$ if and only if there exists a minimal winning coalition $S \subseteq N$ such that $i,j \in S$.
\end{theorem}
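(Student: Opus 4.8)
The plan is to prove both directions using Lemma~\ref{lem:simple:positive}, which already characterizes positive dependence in simple games via the existence of a coalition $S \subseteq N \setminus \{i,j\}$ with $S \cup \{i,j\}$ winning but $S \cup \{i\}$ and $S \cup \{j\}$ losing. The easy direction is the "if" direction. Suppose $S$ is a minimal winning coalition containing both $i$ and $j$. Set $T = S \setminus \{i,j\}$. Then $T \cup \{i,j\} = S$ is winning; since $S$ is minimal winning, removing $j$ makes it losing, so $T \cup \{i\} = S \setminus \{j\}$ is losing, and similarly $T \cup \{j\} = S \setminus \{i\}$ is losing. By Lemma~\ref{lem:simple:positive}, $i$ positively depends on $j$.

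For the "only if" direction, suppose $i$ positively depends on $j$. By Lemma~\ref{lem:simple:positive}, there is a coalition $S \subseteq N \setminus \{i,j\}$ such that $S \cup \{i,j\}$ is winning while $S \cup \{i\}$ and $S \cup \{j\}$ are both losing. Now take any minimal winning coalition $W \subseteq S \cup \{i,j\}$; such a $W$ exists because $S \cup \{i,j\}$ is winning and a simple game is monotone (so one can repeatedly remove players while staying winning until minimality is reached). I claim $i, j \in W$. Indeed, if $i \notin W$, then $W \subseteq (S \cup \{i,j\}) \setminus \{i\} = S \cup \{j\}$, so by monotonicity $S \cup \{j\}$ would be winning, contradiction. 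Symmetrically $j \notin W$ would force $S \cup \{i\}$ to be winning, again a contradiction. Hence $W$ is a minimal winning coalition containing both $i$ and $j$, as required.

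I do not expect a serious obstacle here: the statement is essentially a repackaging of Lemma~\ref{lem:simple:positive} together with monotonicity. The one point that needs a little care is the "only if" direction — one must extract a minimal winning coalition from $S \cup \{i,j\}$ rather than using $S \cup \{i,j\}$ itself (which need not be minimal), and then argue that $i$ and $j$ cannot be dropped, which is exactly where the losing-ness of $S \cup \{i\}$ and $S \cup \{j\}$ is used. The corollary mentioned in the surrounding text — that minimal winning coalitions induce cliques in $G^+_v$ — then follows immediately, since any two players in a common minimal winning coalition positively depend on each other by the theorem.
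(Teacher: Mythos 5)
Your proof is correct and follows essentially the same route as the paper: both directions go through Lemma~\ref{lem:simple:positive}, with the ``if'' direction identical and the ``only if'' direction differing only in a minor technical choice (you extract an arbitrary minimal winning coalition inside $S\cup\{i,j\}$ and show it must contain $i$ and $j$, whereas the paper minimizes $S$ itself and shows $S^*\cup\{i,j\}$ is minimal winning). Both minimality arguments are valid and interchangeable.
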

\begin{proof}
Suppose that $i$ positively depends on $j$. By Lemma~\ref{lem:simple:positive}
there exists a coalition $S \subseteq N \setminus \{i\}$ such that $S \cup \{i,j\}$ is winning, 
but coalitions $S \cup \{j\}$ and $S \cup \{i\}$ are losing. 
Choose a minimal coalition $S^*$ with respect to this property. Then, removing any 
player from $S^* \cup \{i,j\}$ makes this coalition losing, and hence $S^* \cup \{i,j\}$ is a minimal winning 
coalition. Conversely, suppose that $i,j$ belong to some minimal winning coalition $S$. Since $S$ is a minimal 
winning coalition, both $S\setminus\{j\}$ and $S\setminus\{i\}$ are losing, but $S$ is winning. 
By Lemma~\ref{lem:simple:positive}, $i$ positively depends on $j$.
\end{proof}

\begin{corollary}\label{cor:MWC}
Consider a simple game $(N,v)$ and a minimal winning coalition $S \subseteq N$.
Any two distinct players $i, j\in S$ positively depend on each other.
\end{corollary}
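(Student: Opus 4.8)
The plan is to obtain this as an immediate consequence of Theorem~\ref{thm:MWC}. Fix a minimal winning coalition $S \subseteq N$ and two distinct players $i, j \in S$. Since $S$ is itself a minimal winning coalition containing both $i$ and $j$, the ``if'' direction of Theorem~\ref{thm:MWC} applies verbatim and yields that $i$ positively depends on $j$.

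To finish, I would invoke symmetry. The relation of positive dependence is symmetric \cite{Feige2013}, so from ``$i$ positively depends on $j$'' we immediately get ``$j$ positively depends on $i$'', and hence the two players positively depend on each other. Alternatively, one can simply re-run the same argument with the roles of $i$ and $j$ exchanged: the witness coalition $S$ is unchanged and still contains both players, so Theorem~\ref{thm:MWC} gives the reverse dependence directly.

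I do not expect any genuine obstacle here: the corollary is essentially a specialization of Theorem~\ref{thm:MWC} in which the existentially quantified minimal winning coalition is taken to be $S$ itself. The only point requiring a moment's care is to apply the theorem with this particular $S$ rather than with an unspecified minimal winning coalition, but since $S$ is assumed to be a minimal winning coalition and to contain $i$ and $j$, this is automatic. (If one preferred a self-contained argument, one could instead apply Lemma~\ref{lem:simple:positive} with the coalition $T = S \setminus \{i,j\}$: minimality of $S$ forces both $T \cup \{i\}$ and $T \cup \{j\}$ to be losing, while $T \cup \{i,j\} = S$ is winning, which is exactly the condition in the lemma; this mirrors the ``conversely'' part of the proof of Theorem~\ref{thm:MWC}.)
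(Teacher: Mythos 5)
Your proof is correct and matches the paper's intent exactly: the corollary is stated without a separate proof because it is the ``if'' direction of Theorem~\ref{thm:MWC} applied with the witness coalition $S$ itself (equivalently, the ``conversely'' part of that theorem's proof via Lemma~\ref{lem:simple:positive} with $S\setminus\{i,j\}$), plus symmetry of positive dependence, which is precisely your argument.
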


{\em Weighted voting games} form a subclass of simple games. Such games can be 
succinctly represented by the {\em weight} of each player, and a {\em quota}.
Formally, a weighted voting game with a set of players $N=[n]$ is given by a list of non-negative weights 
$\bfw=(w_1,w_2,\ldots,w_n)$ and a quota $q$; we will write $[N;\bfw;q]$. Its characteristic function 
$v:2^N \rightarrow \{0,1\}$ is given by
\[
v(S)=
\begin{cases}
&1~\mbox{if}~\sum_{i \in S}w_i \geq q\\
&0~\mbox{otherwise}.
\end{cases}
\]
It turns out that the supermodular dependency graph of a weighted voting game has a special structure: 
we will show that any such graph is a {\em chordal graph}.
%Def of a chordal graph
Recall that a graph is said to be {\em chordal} if any cycle of four or more nodes has a {\em chord}, 
i.e., an edge that does not belong to the cycle, but connects two of its nodes.

We first state and prove the following lemma. 

\begin{lemma}\label{lem:chordal:supermodular}
Consider a weighted voting game $[N;\bfw;q]$ where a player $i\in N$ 
positively depends on a player $k\in N$ and 
$w_i \geq w_j \geq w_k$. Then, player $i$ positively depends on player $j$.
\end{lemma}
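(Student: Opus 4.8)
The plan is to combine the combinatorial description of positive dependence from Lemma~\ref{lem:simple:positive} with a short weight-exchange argument. Since $i$ positively depends on $k$, that lemma supplies a coalition $S \subseteq N \setminus \{i,k\}$ with $w(S)+w_i+w_k \geq q$, $w(S)+w_i < q$, and $w(S)+w_k < q$ (writing $w(S)=\sum_{\ell \in S} w_\ell$ per the usual convention). It then suffices to exhibit a coalition $T \subseteq N \setminus \{i,j\}$ satisfying the analogous three conditions with $j$ in place of $k$, because Lemma~\ref{lem:simple:positive} applied to the pair $i,j$ will certify that $i$ positively depends on $j$. If $j=k$ there is nothing to prove, and $j \neq i$ since positive dependence relates distinct players, so I may assume $j \notin \{i,k\}$.

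I would then split into two cases according to whether $j \in S$. If $j \notin S$, I take $T = S$; it avoids both $i$ and $j$, and the required inequalities are immediate from those for $S$ together with the ordering $w_k \leq w_j \leq w_i$: from $w_j \geq w_k$ the coalition $T \cup \{i,j\}$ is winning, from $w_j \leq w_i$ the coalition $T \cup \{j\}$ is losing, and $T \cup \{i\}$ is literally $S \cup \{i\}$, hence losing. If $j \in S$, I swap $j$ out for $k$, setting $T = (S \setminus \{j\}) \cup \{k\}$, so that $w(T) = w(S) - w_j + w_k$; this set avoids $i$ (as $i \notin S$) and $j$ (as $j$ is removed and $k \neq j$). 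Then $w(T)+w_i+w_j = w(S)+w_i+w_k \geq q$, $w(T)+w_j = w(S)+w_k < q$, and $w(T)+w_i = w(S)+w_i-(w_j-w_k) \leq w(S)+w_i < q$ using $w_j \geq w_k$. In either case $T$ is the desired witness, completing the proof.

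I do not anticipate a genuine obstacle: the argument is a two-case split, and the only points requiring care are keeping $i,j,k$ pairwise distinct and observing that replacing the heavier player $j$ by the lighter player $k$ preserves the weight of the ``full'' test coalition $T \cup \{i,j\}$ while it can only lower the weights of $T \cup \{i\}$ and $T \cup \{j\}$, which is exactly what keeps those two coalitions losing. This lemma will presumably serve as the engine for the subsequent chordality claim for $G^+_v$ in weighted voting games, applied to the lightest and heaviest players along an induced cycle.
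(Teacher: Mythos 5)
Your proof is correct and follows the same basic route as the paper: take a witness coalition $S$ for the positive dependence of $i$ on $k$ (via Lemma~\ref{lem:simple:positive}) and compare weights using $w_i \geq w_j \geq w_k$. The one difference is that you are more careful than the paper's own one-line argument, which silently treats $S$ as if it avoided $j$: when $j \in S$ the coalition $S$ itself cannot serve as a witness for the pair $i,j$ (indeed $S\cup\{i,j\}=S\cup\{i\}$ is losing), and your swap $T=(S\setminus\{j\})\cup\{k\}$, which preserves $w(T)+w_i+w_j = w(S)+w_i+w_k$ while not increasing the weights of $T\cup\{i\}$ and $T\cup\{j\}$, is exactly the fix needed to make that case go through; this matters, since in the chordality argument of Theorem~\ref{thm:chordal:supermodular} nothing prevents the third player from lying in the witness coalition.
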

\begin{proof}
Since $i$ positively depends on $k$, there is a coalition $S$ such that $S \cup \{i,k\}$ is winning, but $S \cup \{i\}$ and $S \cup \{k\}$ are losing.
Now, since $w_{i} \geq w_{j} \geq w_{k}$, $S \cup \{i,j\}$ is winning, but $S \cup \{i\}$ and $S \cup \{j\}$ are losing, implying that $i$ positively depends on $j$. 
\end{proof}

\begin{theorem}\label{thm:chordal:supermodular}
For every weighted voting game $[N;\bfw;q]$ it holds that its supermodular dependency graph is chordal.
\end{theorem}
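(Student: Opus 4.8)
The plan is to use the definition of chordality directly: I will take an arbitrary cycle $C$ in $G^+_v$ with at least four vertices and exhibit a chord. The key idea is to focus on the vertex of $C$ with the smallest weight and apply Lemma~\ref{lem:chordal:supermodular} to its two neighbours along the cycle.

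Concretely, write $C = (u_1, u_2, \ldots, u_\ell, u_1)$ with $\ell \geq 4$, and let $u_t$ be a vertex of $C$ of minimum weight, with ties broken arbitrarily. Its two cycle-neighbours $u_{t-1}$ and $u_{t+1}$ (indices taken modulo $\ell$) are distinct, are non-consecutive on $C$ because $\ell \geq 4$, and both satisfy $w_{u_{t-1}}, w_{u_{t+1}} \geq w_{u_t}$ by the choice of $u_t$. Assume without loss of generality that $w_{u_{t-1}} \geq w_{u_{t+1}}$ (otherwise swap the roles of $u_{t-1}$ and $u_{t+1}$). Since $\{u_t, u_{t-1}\}$ is an edge of $G^+_v$ and positive dependence is symmetric, $u_{t-1}$ positively depends on $u_t$; as $w_{u_{t-1}} \geq w_{u_{t+1}} \geq w_{u_t}$, Lemma~\ref{lem:chordal:supermodular} yields that $u_{t-1}$ positively depends on $u_{t+1}$. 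Hence $\{u_{t-1}, u_{t+1}\}$ is an edge of $G^+_v$, and since $u_{t-1}$ and $u_{t+1}$ are non-consecutive on $C$, this edge is a chord of $C$. Therefore every cycle of length at least four has a chord, i.e., $G^+_v$ is chordal.

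I do not expect a serious obstacle; the one point requiring care is that the two cycle-neighbours of the minimum-weight vertex are genuinely non-consecutive on $C$, which is exactly where the hypothesis $\ell \geq 4$ is used (for a triangle, the putative ``chord'' would already be an edge of the cycle). An alternative and essentially equivalent route is to argue that ordering the players by non-decreasing weight gives a perfect elimination ordering of $G^+_v$: for any player $i$, any two neighbours $j, k$ of $i$ with weight at least $w_i$ are adjacent, by the same application of Lemma~\ref{lem:chordal:supermodular} (take the one of $j,k$ with larger weight, which positively depends on $i$, and insert the other between it and $i$ in weight); a graph that admits a perfect elimination ordering is chordal.
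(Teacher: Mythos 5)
Your proof is correct and rests on the same key ingredient as the paper's, namely Lemma~\ref{lem:chordal:supermodular} applied along a cycle of length at least four; the only difference is execution, not route. Where the paper argues by contradiction on a chordless cycle, anchoring at the maximum-weight vertex and splitting into two cases, you anchor at the minimum-weight vertex and obtain the chord $\{u_{t-1},u_{t+1}\}$ directly (equivalently, your observation that non-decreasing weight order is a perfect elimination ordering), which is a slightly cleaner presentation of the same argument.
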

\begin{proof}
Suppose towards a contradiction that $G^{+}_v$ has a chordless cycle $C$ of at least four players. Let 
$C=\{i_1,i_2,\ldots,i_k\}$, where $i_j$ positively depends on $i_{j+1}$ for $j \in [k]$ 
(with the convention that $i_{k+1}=i_1$). 
Assume without loss of generality that $i_1$ has the maximum weight among the players in $C$ and 
$w_{i_2} \leq w_{i_k}$.

First, suppose that there is a player $i_j \in C \setminus \{i_1,i_2,i_k\}$ such that 
$w_{i_j} \geq w_{i_2}$. Since $i_1$ positively depends on $i_2$ and 
$w_{i_1} \geq w_{i_j} \geq w_{i_2}$, by Lemma \ref{lem:chordal:supermodular}
this means that $i_1$ positively depends on $i_j$, a contradiction.
Now suppose that for all players $i_j \in C \setminus \{i_1,i_2,i_k\}$ we have $w_{i_j} <w_{i_2}$; 
in particular, we have $w_{i_{k-1}} <w_{i_2}$. Since
$i_{k}$ positively depends on $i_{k-1}$, and $w_{i_k} \geq w_{i_2} \geq w_{i_{k-1}}$, 
by Lemma \ref{lem:chordal:supermodular} player $i_k$ positively depends 
on $i_2$, a contradiction.
\end{proof}

A simple game $(N,v)$ is the {\em intersection} of $k$ weighted voting games 
$[N;\bfw^{\ell};q^{\ell}]$, $\ell \in [k]$, if for every coalition $S \subseteq N$
we have  $v(S)=1$ if and only if $w^{\ell}(S) \geq q^{\ell}$ for all $\ell \in [k]$. 
It is known that every simple game $G$ can be represented as an intersection of multiple weighted voting games;
the minimum number of weighted voting games whose intersection equals to $G$
is called the {\em dimension} of~$G$.
 
Observe that the game defined in Example~\ref{ex:dependency} has dimension $2$: 
it can be represented as the intersection of weighted voting games $[N;\bfw^{1};1]$ and $[N;\bfw^{2};1]$, 
where $w^{1}(1)=w^{1}(3)=w^{2}(2)=w^{2}(4)=1$, and all other weights are zero. However, 
its supermodular dependency graph has a chordless cycle of length four; 
thus, we cannot guarantee that the supermodular dependency graph of a simple 
game is chordal beyond dimension $1$.

%Similarly to the proof of Theorem \ref{thm:chordal:supermodular}, one can also show that the dependency graph of a weighted voting game is chordal.
%\begin{theorem}
%For a weighted voting game $[N;\bfw;q]$, the dependency graph forms a chordal graph.
%\end{theorem}

%%%%%%%%%%%%%%%%%%%%%%%%%%%%%%%%%%%%%%%%%%%%%%%%%%%%%%%%%%%%%%%%%

\section{Complexity of Stability-Related Solution Concepts}
In this section, we investigate the complexity of computing outcomes 
in the core and the least core. 

For simple games, it is well-known that deciding if the core
is not empty or finding an outcome in the core is easy. We now complement this result
by showing that computing an element of the least core in simple games
is fixed-parameter tractable with respect to the supermodular degree. 

\begin{theorem}\label{FPT:LC}
Let $(N,v)$ be a simple game. Given its supermodular dependency graph $G^+_v$
and oracle access to $v$, we can compute an element of the least core in FPT 
time with respect to the supermodular degree.
\end{theorem}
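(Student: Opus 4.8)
The plan is to turn the exponential-size linear program $\LP_0$ into an equivalent one with only polynomially many constraints (for fixed supermodular degree $p$), and then solve it with any polynomial-time LP algorithm. First I would dispose of the degenerate cases: if $v(N)=0$ then $v\equiv 0$ and $\bfzero$ is the unique imputation; if $n=1$ the unique imputation is $x_1=v(\{1\})$; and if $\sum_{i\in N}v(\{i\})>1=v(N)$ the game has no imputation, so the least core is empty and this is detected with $n$ oracle calls. Otherwise $v(N)=1$, and every imputation $\bfx$ satisfies $x_i\ge v(\{i\})\ge 0$, hence $\bfx\ge\bfzero$ throughout the feasible region of $\LP_0$ — this monotone-coordinate fact is the lever for the whole reduction.

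The key structural input is Corollary~\ref{cor:MWC}: every minimal winning coalition is a clique in $G^+_v$, so it has size at most $p+1$, and if $i$ lies in a minimal winning coalition $S$ then $S\setminus\{i\}\subseteq D^{+}(i)$. Using the given graph $G^+_v$ to read off $D^{+}(i)$, I would enumerate the set $\calMW$ of all minimal winning coalitions as follows: for each player $i\in N$ and each subset $T\subseteq D^{+}(i)$ (at most $n\cdot 2^{p}$ such pairs), form $S=\{i\}\cup T$ and use the oracle to test whether $v(S)=1$ and $v(S\setminus\{j\})=0$ for all $j\in S$. This uses $O(n 2^{p} p)$ oracle calls, produces no false positives by construction, and misses no minimal winning coalition $S^{*}$ (picking any $i\in S^{*}$ and $T=S^{*}\setminus\{i\}$ recovers it). So $\calMW$ is computed in FPT time and $|\calMW|\le n 2^{p}$.

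Next I would show that $\LP_0$ is equivalent to the reduced program $\LP_1$ obtained by keeping only the following constraints of $\LP_0$: $x(S)\ge 1-\varepsilon$ for each $S\in\calMW\setminus\{N\}$; $x_i\ge -\varepsilon$ for each $i\in N$ such that $\{i\}$ is losing; $x_i\ge v(\{i\})$ for each $i\in N$; and $x(N)=1$. Since these are a subset of $\LP_0$'s constraints, every feasible solution of $\LP_0$ is feasible for $\LP_1$. Conversely, a feasible $(\bfx,\varepsilon)$ of $\LP_1$ has $\bfx\ge\bfzero$, so for any winning $S\subsetneq N$ I pick a minimal winning $S'\subseteq S$ (necessarily $S'\in\calMW\setminus\{N\}$) and get $x(S)\ge x(S')\ge 1-\varepsilon$; and for any losing $S\neq\emptyset$ I pick any $j\in S$ — for which $\{j\}$ is losing by monotonicity — and get $x(S)\ge x_j\ge -\varepsilon$. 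Hence $(\bfx,\varepsilon)$ satisfies all of $\LP_0$, so the two programs have the same feasible region and the same optimal solutions. As $\LP_1$ has $O(n 2^{p})$ constraints and $n+1$ variables with $\{0,1\}$ coefficients, it is solved and an optimal vertex extracted in time $\poly(n)$ for fixed $p$; the $\bfx$-component of any optimal solution is an element of the least core.

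I expect the main obstacle to be the reduction step rather than the enumeration: one has to notice that individual rationality forces $\bfx\ge\bfzero$ and then that this, combined with monotonicity, collapses the exponentially many winning-coalition constraints onto the $O(n 2^{p})$ minimal winning coalitions (already known to be small cliques via Corollary~\ref{cor:MWC}) and the losing-coalition constraints onto the $n$ singleton bounds $x_i\ge -\varepsilon$. The remaining work — correctness of the clique-guided enumeration and handling the boundary cases $v(N)=0$, $n=1$, two or more passers, and $N$ itself being minimal winning (simply omitted from $\LP_1$) — is routine bookkeeping.
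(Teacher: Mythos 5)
Your proof is correct, but it takes a genuinely different route from the paper's. The paper keeps the exponential program $\LP_0$ intact and shows that its \emph{separation problem} is FPT in $p$: given $(\varepsilon,\bfx)$, it rewrites $\min\{x(S)-v(S)\}$ as $\min\{x(S)-1 \mid S\in\calM\}$ (using non-negativity of imputations) and evaluates this by iterating over all $i$ and all subsets of $D^{+}(i)\cup\{i\}$, exactly as licensed by Corollary~\ref{cor:MWC}; optimization then follows from separation via the ellipsoid method. You instead use the same clique-guided enumeration to list $\calMW$ up front and write down an explicit equivalent LP with $O(n2^{p})$ constraints, proving equivalence by the domination argument (winning $S$ dominated by a minimal winning subset, losing $S$ dominated by a losing singleton), and then solve it directly. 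The key structural ingredients are identical --- Corollary~\ref{cor:MWC} plus the observation that individual rationality forces $\bfx\ge\bfzero$ so monotonicity collapses the constraint family --- but your version is more elementary (no separation/optimization equivalence needed) and more careful at the margins: your explicit constraints $x_i\ge-\varepsilon$ for losing singletons correctly handle the regime $\varepsilon^*<0$, in particular the case where $N$ is the only winning coalition and $\calM$ as used in the paper's rewriting is empty, a case the paper's formula silently glosses over. The paper's separation-oracle formulation, in turn, avoids materializing $\calMW$ and is the standard tool if one only wants feasibility/core-membership queries rather than an explicit compact LP; asymptotically the two are comparable, both FPT in $p$.
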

\begin{proof}
We first check whether the input game admits an imputation, i.e., whether $v(N) \geq \sum_{i \in N}v(\{i\})$: 
if not, the least core is empty. Thus, from now on we assume that $v(N) \geq \sum_{i \in N}v(\{i\})$.

It suffices to show that the separation problem for the linear program $\LP_0$ is fixed-parameter 
tractable with respect to the supermodular degree. Fix an $\varepsilon \in \bbR$ and $\bfx \in \bbR^N$. 
First, we can clearly check in polynomial time
whether $\bfx$ is an imputation; thus, in the rest of the proof we assume that this is indeed the case.
We need to show that 
deciding whether the following inequality holds is in FPT with respect to $p$:
\begin{equation}\label{eq:1}
\min \{\, x(S)-v(S) \mid S \in 2^N \setminus \{N,\emptyset\} \,\} \geq \varepsilon.
\end{equation}
Since $\bfx$ is an imputation, we have $x(N) \leq 1$ and $x_i \ge 0$ for all $i \in N$. 
Now, by non-negativity of $\bfx$ and by the fact that $x(S)\leq 1$ for all $S \subseteq N$, 
the term on the left can be rewritten as 
\begin{align*}
&\min \{\, x(S)-v(S) \mid S \in 2^N \setminus \{N,\emptyset\} \,\}\\
&= \min \{\, x(S) -1  \mid S \in \calM \, \},
\end{align*}
where $\calM$ is the set of all minimal winning coalitions in $2^N \setminus \{N,\emptyset\}$. 
It remains to show that the computation of $\min \{\, x(S) -1  \mid S \in \calM \, \}$ 
is in FPT with respect to the supermodular degree. Fix $i \in N$ and let $\calM(i)$ denote 
the set of all coalitions in $\calM$ including $i$. By Corollary~\ref{cor:MWC}, 
these coalitions are subsets of $D^{+}(i)\cup \{i\}$, i.e., 
$\calM(i) \subseteq 2^{D^{+}(i) \cup \{i\}}$. By iterating through all $i \in N$ 
and all subsets of $D^{+}(i)\cup \{i\}$, we can compute the value 
$\min \{\, x(S) -1  \mid S \in \calM \, \}$. 
\end{proof}

When the dependency graph has degree at most $2$, \citet{Feige2013} showed that demand queries can be answered in 
polynomial time; that is, given a characteristic function $v:2^N \rightarrow \bbR$ and a vector $\bfx \in 
\bbR^N$, one can efficiently compute a subset $S^*$ maximizing $v(S^*)-x(S^*)$ over any subfamily of $2^N$. This 
allows us to obtain the following result.

\begin{theorem}\label{poly:LC:Nucleolus}
Consider a game $(N,v)$ whose dependency degree is at most $2$. Given oracle access
to the characteristic function $v$, we can decide the 
non-emptiness of the core or find an element of the least core
in time polynomial in $n$. 
\end{theorem}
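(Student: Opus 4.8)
The plan is to reduce both tasks to solving the linear program $\LP_0$ by the ellipsoid method, exploiting the fact that for games of dependency degree at most $2$ the separation problem for $\LP_0$ is (essentially) a single demand query, which is answerable in polynomial time by the result of \citet{Feige2013} quoted above. This is the same reduction as in the proof of Theorem~\ref{FPT:LC}, but now the bound on the dependency degree lets us drop the restriction to simple games.

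First, with $n+1$ oracle calls, check whether $v(N) \ge \sum_{i \in N} v(\{i\})$; if not, no imputation exists, so the core and the least core are both empty and we are done. Otherwise, consider the separation problem for $\LP_0$: given $\bfx \in \bbR^N$ and $\varepsilon \in \bbR$, we must either certify that $(\bfx,\varepsilon)$ is feasible or return a violated constraint. The efficiency constraint $x(N)=v(N)$ and the $n$ individual-rationality constraints $x_i \ge v(\{i\})$ are checked directly. The remaining constraints $x(S) \ge v(S)-\varepsilon$, $S \in 2^N \setminus \{N,\emptyset\}$, all hold if and only if
\[
\max\{\, v(S) - x(S) \mid S \in 2^N \setminus \{N,\emptyset\} \,\} \le \varepsilon ,
\]
and a maximizer $S^{*}$ either witnesses feasibility or is the required violated constraint. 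Computing this maximum is a demand query: $2^N \setminus \{N,\emptyset\}$ is a subfamily of $2^N$, so the cited algorithm applies. (If one prefers to appeal to the cited result only for families of a more restricted shape, note that $\emptyset$ can be ignored since it contributes $0$, and that $\max\{\, v(S)-x(S) \mid \emptyset \ne S \subsetneq N\,\}$ is the maximum of the $n$ demand queries $\max\{\, v(S)-x(S) \mid \emptyset \ne S \subseteq N \setminus \{j\}\,\}$ over $j \in N$, each taken in the game $v$ restricted to $N\setminus\{j\}$, whose dependency degree is still at most $2$.) Hence the separation problem for $\LP_0$ is solvable in time polynomial in $n$.

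By the ellipsoid method (equivalence of separation and optimization), we can then solve $\LP_0$ in time polynomial in $n$ and the bit-complexity of the values of $v$, obtaining the optimal value $\varepsilon^{*}$ together with an optimal solution $\bfx^{*}$, which by definition lies in the least core. Moreover, the core is non-empty if and only if $\varepsilon^{*} \le 0$: on the one hand any core imputation is feasible for $\LP_0$ with $\varepsilon = 0$; on the other hand, if $\varepsilon^{*} \le 0$, then $\bfx^{*}$ satisfies $x^{*}(S) \ge v(S) - \varepsilon^{*} \ge v(S)$ for every $S \in 2^N \setminus \{N,\emptyset\}$, together with $x^{*}(N)=v(N)$ and $x^{*}(\emptyset)=0=v(\emptyset)$, so $\bfx^{*}$ belongs to the core.

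The genuinely non-trivial ingredient is the demand-query algorithm for dependency degree at most $2$, which we import from \citet{Feige2013} (it is a dynamic program over the dependency graph, which in this regime is a disjoint union of paths and cycles). Given that, the remaining work is routine: checking that core/least-core separation is really a demand query over $2^N \setminus \{N,\emptyset\}$ (done above), and running the ellipsoid method with its standard guarantees, which needs only the usual bound on the encoding length of the vertices of the feasible polyhedron of $\LP_0$ in terms of the bit-complexity of $v$. I expect no further obstacle.
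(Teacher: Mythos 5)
Your proposal is correct and follows essentially the same route as the paper: it solves $\LP_0$ via the ellipsoid method, using the demand-query algorithm of \citet{Feige2013} for dependency degree at most $2$ as the separation oracle, and decides core non-emptiness by comparing the optimal excess with $0$ (the paper phrases this as feasibility of $\LP_0$ with $\varepsilon=0$). You merely spell out details the paper leaves implicit, such as handling the exclusion of $N$ and $\emptyset$ from the demand query and the standard ellipsoid bit-complexity bounds.
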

\begin{proof}
We will argue that the separation problem for $\LP_0$ 
can be solved in time polynomial in $n$. Consider an $\varepsilon \in \bbR$ and 
a vector $\bfx \in \bbR^N$. Again, one can decide whether 
$\bfx$ is an imputation in time $O(n)$. Now, if the dependency degree is at most $2$, 
it follows from the work of \citet{Feige2013} that the maximum value $v(S)-x(S)$ over 
coalitions in $S \in 2^{N} \setminus \{N,\emptyset\}$ can be computed in 
time polynomial in~$n$; thus, we can compare the maximum with the given $\varepsilon$ and 
efficiently decide whether $(\varepsilon,\bfx)$ satisfies the inequalities in~$\LP_0$.

It remains to notice that the imputations in the core are exactly the optimal solutions to $\LP_0$ with $\varepsilon$ 
replaced with the value $0$, i.e., we can use the same procedure as above to decide whether the core is non-empty.
\end{proof}

However, this argument does not extend to general cooperative games:
we will now demonstrate that there is a succinctly representable class of games 
for which the separation problem for $\LP_0$ is NP-hard,
even though the dependency degree of games in this class is bounded
by a small constant and their supermodular degree is $1$,

\begin{theorem}
There exists a family of hypergraph games with dependency degree~7 and supermodular degree~1
for which the separation problem for $\LP_0$ is {\em NP}-hard. 
\end{theorem}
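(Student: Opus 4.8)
The plan is to reduce from \textsc{Independent Set} on cubic graphs (NP-hard even on $3$-regular planar graphs) to the separation problem for $\LP_0$ over a family of \emph{hypergraph games}: games $(N,v)$ given by a weighted hypergraph $(N,E,w)$, $w\colon E\to\bbR$, with $v(S)=\sum_{e\in E,\,e\subseteq S}w(e)$. Given an instance $(G,k)$ with $G=(V,E_G)$ cubic, write $\alpha(G)$ for its independence number and set $M:=|V|+1$ and $B:=M|E_G|-|V|\ge 1$ (both polynomially bounded).

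\emph{Construction.} Take $N=V\sqcup V'\sqcup\{y_1,y_2\}$ where $V'=\{u':u\in V\}$ is a disjoint copy of $V$. The hyperedges are: $\{u,u'\}$ of weight $+1$ for each $u\in V$; $\{u,u',v,v'\}$ of weight $-M$ for each $\{u,v\}\in E_G$; and one ``budget'' hyperedge $\{y_1,y_2\}$ of weight $+B$. A direct check gives $v(\{i\})=0$ for every $i$, and $v(N)=|V|-M|E_G|+B=0$ by the choice of $B$. The only positive-weight hyperedge through any player is a $\{u,u'\}$ or $\{y_1,y_2\}$, so every set $D^{+}(\cdot)$ has size $\le 1$; taking $S=\emptyset$ shows equality, so the supermodular degree is exactly $1$. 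For the dependency graph, toggling a player who shares a hyperedge with $i$ alters some marginal contribution of $i$ (for $\{u,u'\}$ use $S=\emptyset$; for a size-$4$ edge use $S$ equal to the two remaining vertices, which is its unique common hyperedge), so $D(u)=\{u'\}\cup\bigcup_{\{u,v\}\in E_G}\{v,v'\}$ and $D(u')=\{u\}\cup\bigcup_{\{u,v\}\in E_G}\{v,v'\}$, each of size $1+2\cdot 3=7$ (these $7$ players are pairwise distinct since $V,V'$ are disjoint and $v\ne u$), while $D(y_1)=\{y_2\}$; hence the dependency degree is exactly $7$.

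\emph{The reduction.} Since $v(N)=0$ and all $v(\{i\})=0$, the vector $\bfx=\bfzero$ satisfies efficiency and individual rationality, so $(\varepsilon,\bfzero)$ is feasible for $\LP_0$ iff $\varepsilon\ge\max\{v(S):\emptyset\ne S\subsetneq N\}$; I claim this maximum is $B+\alpha(G)$. An optimal $S$ may be assumed to contain $\{y_1,y_2\}$ (weight $+B$, no penalty) and, among the other players, a union of complete pairs $\{u,u'\}$ over some $T\subseteq V$ (an unmatched half of a pair contributes nothing to any hyperedge, so adding it is neutral); then $v(S)=B+|T|-M\,|E_G[T]|$, where $E_G[T]$ is the edge set of $G$ inside $T$. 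If $T$ is independent this is $B+|T|\le B+\alpha(G)$; otherwise $v(S)\le B+|V|-M=B-1$. The value $B+\alpha(G)$ is attained at $S=\{y_1,y_2\}\cup\{u,u':u\in I^{*}\}$ for a maximum independent set $I^{*}$, which is a proper nonempty subset of $N$ (assume $E_G\ne\emptyset$, so $I^*\subsetneq V$), and $v(N)=v(\emptyset)=0<B+\alpha(G)$, so excluding $N,\emptyset$ is harmless. Querying the separation oracle at $(\,k+B,\ \bfzero\,)$ therefore decides whether $\alpha(G)\le k$; as the game has $2|V|+2$ players, $|V|+|E_G|+1$ hyperedges and polynomially bounded weights, the reduction is polynomial, so the separation problem for this family is NP-hard.

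\emph{The main obstacle.} The delicate point is the efficiency constraint $x(N)=v(N)$ of $\LP_0$: without the budget hyperedge, $v(N)=|V|-M|E_G|$ is negative, no imputation exists, and the separation question degenerates. Repairing $v(N)$ is tightly constrained --- a positive-weight hyperedge of size $\ge 3$ would make almost every pair of its members positively depend on each other (take $S$ to be the remaining vertices of that hyperedge), destroying $p=1$; and a positive \emph{singleton} weight on $i$ would force $x_i$ up in any imputation and cancel the corresponding reward term in $v(S)-x(S)$, erasing the encoded instance. The size-$2$ hyperedge $\{y_1,y_2\}$ threads this needle, lifting $v(N)$ to exactly $0$ while adding only a single edge to each of $G_v$ and $G^{+}_v$. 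The rest is the routine case analysis sketched above, verifying that this gadget leaves $\max_{\emptyset\ne S\subsetneq N}v(S)=B+\alpha(G)$ and that the maximizer remains a proper subcoalition.
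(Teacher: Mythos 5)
Your proposal is correct and follows essentially the same reduction as the paper: starting from independent set in $3$-regular graphs, you duplicate each vertex into a positively-linked pair, penalize each graph edge with a negative rank-$4$ hyperedge, and add an extra two-player gadget so that an easily checkable imputation exists, yielding supermodular degree $1$ and dependency degree $7$. The only differences are cosmetic choices of constants (your big-$M$ penalty and budget edge force $v(N)=0$ so that $\bfx=\bfzero$ works, whereas the paper uses weight-$3$ pairs, unit penalties, and a nonzero imputation), which do not change the substance of the argument.
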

\begin{proof}
An instance of {\sc 3-Regular Independent Set} is given by a 3-regular graph $G$
and an integer $k$; it is a `yes'-instance if $G$ has an independent set of size at least $k$
and a `no'-instance otherwise. This problem is known to be NP-hard \cite{garey}.
We will now show how to reduce it to the separation problem for $\LP_0$ for a family of hypergraph games defined below.
For every vertex $v \in V$, we introduce two players $v^1, v^2$, and let 
$N =\{v^1, v^2\mid v\in V\}\cup\{d^1, d^2\}$. 
We define a set function $v$ by building its hypergraph representation $H=(N, E_H, w)$, 
so that for each $S\subseteq N$ the value 
$v(S)$ is computed as the sum of the weights of all hyperedges of the sub-hypergraph induced by $S$.
We set $w(u)=0$ for each $u\in N$. For each $v\in V$ we connect $v_1$ and $v_2$ 
by an edge (i.e., a hyperedge of rank~2) of weight~3, and we connect $d^1$ and $d^2$
by an edge of weight $|V|/2$.
For every edge $e=\{u,v\} \in E$, we introduce a hyperedge of size~4 and weight $-1$
containing $u^1$, $u^2$, $v^1$ and $v^2$. Since $G$ is 3-regular, $|E|=\frac{3}{2}|V|$ and hence 
$v(N)=3|V|-\frac{3}{2}|V|+\frac{1}{2}|V|=2|V|$.
Moreover, the constructed game has dependency degree $d=7$ since $G$ is 3-regular, 
and there is a pair of dependent vertices 
for every vertex in $G$. Also, the supermodular degree $p$ of this game is $1$, 
as the supermodular dependency set of every vertex only contains 
the other vertex belonging to the same pair.

Now, we set $x^*_u=1$ for each $u\in N\setminus\{d^1, d^2\}$ and $x^*_{d^1}=x^*_{d^2}=0$.
Clearly, the vector $\bfx^*$ is an imputation, as we have $v(N)=x^*(N)=2|V|$ and the constraint $x^*_i \geq v(\{i\})$ is satisfied for every $i \in N$.

It can be shown that $G$ admits an independent set of size $k$ if and only if the maximum excess at $\bfx^*$ is at least $k+\frac{|V|}{2}$.
Indeed if $I\subseteq V$ is an independent set of size $k$ in $G$, then $\{v^1, v^2: v\in I\}\cup\{d^1, d^2\}$ is a coalition whose excess at $\bfx^*$ 
is $k+|V|/2$. Conversely, it can be argued that if there is a coalition whose excess at $\bfx^*$ is $k + \frac{|V |}{2}$ then $G$ has an independent set of size $k$. 
%Edith: I am not fully convinced, but it's getting late...
\end{proof}

%\citet{Hatano2017} also obtained a number of analytical and algorithmic results concerning the least core and studied a case when the value of a coalition is given by the sum of values in induced hypersubgraph games whose edges have positive weights. This corresponds to a situation when the dependency graph coincides with the supermodular dependency graph. 

%%%%%%%%%%%%%%%%%%%%%%%%%%%%%%%%%%%%%%%%%%%%%%%%%%%%%%%%%%%%%%%%%%%%%%%%

\section{Complexity of the Shapley and Banzhaf Values}
For the Shapley and Banzhaf values, the following observation is crucial for our analysis:
for every player $i$ and every coalition $C\subseteq N\setminus\{i\}$ it holds that
adding players who do not depend on $i$ to $C$
does not affect $i$'s marginal contribution to $C$. 
We formalize this observation in the following lemma. We write 
%$I(i)$ to denote the set of players on whom player $i$ does not depend, i.e., 
$I(i)=N\setminus (D(i)\cup \{i\})$.

\begin{lemma}\label{lem:shapley}
For every $S\subseteq D(j)$ and for every $T \subseteq I(j)$ it holds that
$v(j|S\cup T)=v(j|S)$.
\end{lemma}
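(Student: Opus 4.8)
The plan is to induct on $|T|$, keeping $S \subseteq D(j)$ fixed throughout. The base case $T = \emptyset$ is immediate, since then $S \cup T = S$. For the inductive step I would take any nonempty $T \subseteq I(j)$, fix a player $k \in T$, and set $T' = T \setminus \{k\}$, so that $T' \subseteq I(j)$ and $|T'| < |T|$. The key fact to invoke is that, by the very definition $I(j) = N\setminus(D(j)\cup\{j\})$, the player $j$ does not depend on $k$; that is, $v(j\mid U\cup\{k\}) = v(j\mid U)$ for every coalition $U \subseteq N\setminus\{j,k\}$.

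Now apply this with $U = S \cup T'$. This requires checking that $S \cup T'$ is disjoint from $\{j,k\}$: indeed $S \subseteq D(j)$ and $T' \subseteq I(j)$ are both subsets of $N \setminus \{j\}$, so $j \notin S \cup T'$; moreover $k \notin S$ because $S \subseteq D(j)$ while $k \in I(j)$ and $D(j) \cap I(j) = \emptyset$, and $k \notin T'$ by construction. Hence $v(j\mid S\cup T'\cup\{k\}) = v(j\mid S\cup T')$, i.e., $v(j\mid S\cup T) = v(j\mid S\cup T')$. Applying the induction hypothesis to $T'$ gives $v(j\mid S\cup T') = v(j\mid S)$, and chaining the two equalities yields $v(j\mid S\cup T) = v(j\mid S)$, completing the induction.

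The argument is short and I do not anticipate a genuine obstacle; the only point that needs care is the bookkeeping of disjointness in the inductive step, to guarantee that the non-dependence of $j$ on $k$ is being applied to a legal coalition $U \subseteq N\setminus\{j,k\}$. Note that symmetry of the dependency relation is not actually needed here, since ``$k \in I(j)$'' already directly encodes that $j$'s marginal contributions are insensitive to the presence of $k$; the symmetry statement cited earlier is only what additionally lets one phrase this as ``$k$ does not depend on $j$.''
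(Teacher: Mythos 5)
Your proposal is correct and follows essentially the same route as the paper: induction on $|T|$, peeling off one player $k \in T \subseteq I(j)$ and using that $j$ does not depend on $k$ to drop it, then invoking the induction hypothesis. Your version is in fact slightly more careful than the paper's (explicit base case $T=\emptyset$ and the disjointness check that $S\cup T'\subseteq N\setminus\{j,k\}$), but the underlying argument is identical.
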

\begin{proof}
We prove our claim by induction on the size of the set $T$.
The claim clearly holds when $|T|=1$. Suppose it holds for $|T| \leq k-1$. Let $T=\{i_1,i_2,\ldots,i_k\}$. Then
$v(j|S\cup \{i_1,i_2,\ldots,i_{k-1}\})=v(j|S)$ by the induction hypothesis and 
$v(j|S\cup \{i_1,i_2,\ldots,i_{k-1},i_k\})=v(j|S\cup \{i_1,i_2,\ldots,i_{k-1}\})$ by the fact that $i_k \in I(j)$.
Combining these two equations yields
\[
v(j|S\cup T)=v(j|S\cup \{i_1,i_2,\ldots,i_{k-1},i_k\})=v(j|S).
\]
\end{proof}

By Lemma \ref{lem:shapley}, one can calculate the Shapley value of each player by iterating over all subsets of 
her dependency set, computing her marginal contribution, and counting the number of coalitions whose intersection with 
the dependency set is exactly this subset.

\begin{theorem}\label{FPT:Shapley}
Computing the Shapley value is in FPT with respect to the dependency degree.
\end{theorem}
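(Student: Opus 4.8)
The plan is to use Lemma~\ref{lem:shapley} to collapse the exponential sum defining the Shapley value into a sum indexed only by subsets of a player's dependency set. As in Theorem~\ref{FPT:LC}, I assume we are given the dependency graph $G_v$ (equivalently, the sets $D(i)$) together with oracle access to $v$; extracting $G_v$ from a representation of the game is a separate issue, treated later and intractable in general. Fix a player $j$, write $d_j=|D(j)|\le d$, and recall $I(j)=N\setminus(D(j)\cup\{j\})$, so $|I(j)|=n-1-d_j$. Starting from
\[
\phi_j(N,v)=\sum_{C\subseteq N\setminus\{j\}}\frac{|C|!\,(n-|C|-1)!}{n!}\,v(j\mid C),
\]
I would group the coalitions $C$ according to their trace $S:=C\cap D(j)$ on the dependency set. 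Each such $C$ has the form $S\cup T$ with $T\subseteq I(j)$, and Lemma~\ref{lem:shapley} gives $v(j\mid C)=v(j\mid S)$, a quantity independent of~$T$.

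Next I would do the counting: for a fixed $S\subseteq D(j)$ and a fixed size $c=|C|$, the number of coalitions $C$ with $C\cap D(j)=S$ and $|C|=c$ is $\binom{n-1-d_j}{c-|S|}$, since the remaining $c-|S|$ members must be chosen from $I(j)$ (terms outside the valid range of $c$ contribute a zero binomial). Interchanging the order of summation yields
\[
\phi_j(N,v)=\sum_{S\subseteq D(j)}v(j\mid S)\cdot\alpha_{|S|},\qquad
\alpha_s:=\sum_{c=s}^{n-1}\frac{c!\,(n-c-1)!}{n!}\binom{n-1-d_j}{c-s}.
\]
The coefficient $\alpha_s$ depends only on $s$, $d_j$ and $n$, so the $d_j+1$ values $\alpha_0,\dots,\alpha_{d_j}$ can be precomputed once; each is a sum of at most $n$ rationals whose numerators and denominators have polynomial bit-length (for instance $\tfrac{c!(n-c-1)!}{n!}=\bigl(n\binom{n-1}{c}\bigr)^{-1}$), so this precomputation runs in time polynomial in~$n$.

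Finally I would account for the running time: the outer sum has at most $2^{d_j}\le 2^d$ terms; for each $S$ we obtain $v(j\mid S)=v(S\cup\{j\})-v(S)$ with two oracle calls, multiply by the precomputed $\alpha_{|S|}$, and add. Since all numbers involved have bit-length polynomial in~$n$ (values of $v$ are polynomially representable because $v$ is polynomial-time computable, and the $\alpha_s$ share a common denominator of polynomial size), computing $\phi_j$ takes time $O(2^d\poly(n))$, and doing this for all $j\in N$ gives the whole Shapley value in FPT time with respect to~$d$. I expect no genuine obstacle here: the only real content is the combinatorial identity for the number of coalitions with a prescribed intersection with $D(j)$, combined with the fact, supplied by Lemma~\ref{lem:shapley}, that the marginal contribution is constant on each such class; the remaining work is just verifying that the factorial and binomial coefficients can be manipulated in polynomial time.
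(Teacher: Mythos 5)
Your proposal is correct and follows essentially the same route as the paper's proof: both invoke Lemma~\ref{lem:shapley} to replace $v(j\mid S\cup T)$ by $v(j\mid S)$, group coalitions by their intersection $S$ with $D(j)$, and absorb the choice of $T\subseteq I(j)$ into a purely combinatorial coefficient, yielding an $O(2^{d}\poly(n))$ algorithm per player. If anything, your bookkeeping of the coefficient (using $\binom{n-1-d_j}{c-|S|}$ and noting the bit-length issues) is a slightly more careful rendering of the paper's $\alpha(S,t)$ computation.
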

\begin{proof}
The Shapley value can be written as follows:
\begin{align*}
&\phi_i(N,v)=\sum_{S \subseteq N \setminus \{i\}}\frac{|S|! (n-|S| -1)!}{n!}v(i|S),\\
&=\frac{1}{n!} \sum_{S \subseteq D(i)} \sum_{T \subseteq I(i)} \frac{|S\cup T|! (n-|S \cup T| -1)!}{n!} ~v(i| S \cup T),\\
&=\frac{1}{n!} \sum_{S \subseteq D(i)} v(i| S) \sum^{n-|S|-1}_{t=0} \alpha(S,t),
\end{align*}
where $\alpha(S,t)=\binom{n-|S|-1}{t} \frac{(|S|+t)! (n-|S|-t-1)!}{n!}$. The last equality holds due to 
Lemma~\ref{lem:shapley}.
\end{proof}

Similarly, the Banzhaf value can be efficiently computed when the dependency degree is bounded. 

\begin{theorem}\label{FPT:Banzhaf}
Computing the Banzhaf value is in FPT with respect to the dependency degree.
\end{theorem}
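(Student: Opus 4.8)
The plan is to mirror the proof of Theorem~\ref{FPT:Shapley}, exploiting Lemma~\ref{lem:shapley} to collapse the sum over all coalitions $S \subseteq N \setminus \{i\}$ into a sum over subsets of the (small) dependency set $D(i)$, with the rest of the dependence on the coalition size handled by a closed-form combinatorial coefficient. Concretely, I would start from
\[
\beta_i(N,v)=\frac{1}{2^{n-1}} \sum_{S \subseteq N \setminus \{i\}} v(i|S),
\]
split each $S$ as $S = S' \cup T$ with $S' = S \cap D(i) \subseteq D(i)$ and $T = S \cap I(i) \subseteq I(i)$, and apply Lemma~\ref{lem:shapley} to replace $v(i|S'\cup T)$ by $v(i|S')$. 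This gives
\[
\beta_i(N,v)=\frac{1}{2^{n-1}} \sum_{S' \subseteq D(i)} v(i|S') \cdot \bigl|\{\, T : T \subseteq I(i)\,\}\bigr| = \frac{1}{2^{n-1}} \sum_{S' \subseteq D(i)} 2^{|I(i)|}\, v(i|S'),
\]
since for the Banzhaf value the weight does not depend on $|S|$ and every subset $T$ of $I(i)$ contributes equally. Because $|I(i)| = n - |D(i)| - 1$, the factor $2^{|I(i)|}/2^{n-1}$ simplifies to $2^{-|D(i)|}$, leaving a sum over at most $2^{d}$ terms.

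Next I would spell out the algorithmic consequence: iterating over all $S' \subseteq D(i)$ requires at most $2^{|D(i)|} \le 2^{d}$ iterations, each of which makes one oracle call to compute $v(i|S') = v(S'\cup\{i\}) - v(S')$ in polynomial time, and then forms the weighted sum. Repeating this for all $i \in N$ yields total running time $O(2^{d}\cdot \poly(n))$, which is FPT with respect to the dependency degree $d$. I would also note that $D(i)$ itself is assumed available (as in Theorem~\ref{FPT:Shapley}, the algorithm is given the dependency graph, or one may invoke the relevant results on computing it for the representation at hand).

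The counting step — identifying that each $v(i|S')$ is hit by exactly $2^{|I(i)|}$ coalitions $S = S' \cup T$, with $T$ ranging freely over subsets of $I(i)$ — is the only place any thought is needed, and it is genuinely easy here: unlike the Shapley case, there is no dependence on $|S \cup T|$, so no analogue of the coefficient $\alpha(S,t)$ arises and no binomial-sum identity is required. Consequently I do not anticipate a real obstacle; the main point is simply that the Banzhaf weights being uniform over coalitions makes the reduction strictly simpler than for the Shapley value, and the proof is essentially a one-line specialization of the Theorem~\ref{FPT:Shapley} argument together with the observation that $2^{|I(i)|}/2^{n-1} = 2^{-|D(i)|}$.
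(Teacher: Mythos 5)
Your proposal is correct and follows essentially the same route as the paper: decompose each $S\subseteq N\setminus\{i\}$ into its intersections with $D(i)$ and $I(i)$, apply Lemma~\ref{lem:shapley} to replace $v(i|S'\cup T)$ by $v(i|S')$, and observe that the uniform Banzhaf weights make every $T\subseteq I(i)$ contribute equally, giving an FPT algorithm with $O(2^{d}\cdot\poly(n))$ oracle calls. Your counting factor $2^{|I(i)|}=2^{n-|D(i)|-1}$ (hence $\beta_i=2^{-|D(i)|}\sum_{S'\subseteq D(i)}v(i|S')$) is in fact the correct one, whereas the exponent $n-|S|-1$ written in the paper's displayed formula appears to be a slip, since the number of admissible $T$ does not depend on $|S|$.
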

\begin{proof}
The Banzhaf value can be written as follows:
\begin{align*}
\beta_i(N,v)&= \frac{1}{2^{n-1}} \sum_{S \subseteq N \setminus \{i\}} v( i | S),\\
&=\frac{1}{2^{n-1}} \sum_{S \subseteq D(i)} \sum_{T \subseteq I(i)}v(i| S \cup T),\\
&=\frac{1}{2^{n-1}} \sum_{S \subseteq D(i)} v(i|S) ~ 2^{(n-|S|-1)},
\end{align*}
where the last equality holds due to Lemma~\ref{lem:shapley}.
\end{proof}

%%%%%%%%%%%%%%%%%%%%%%%%%%%%%%%%%%%%%%%%%%%%%%%%%%%%%%%%%

\section{Optimal Coalition Structure Generation}\label{sec:csg}
So far, we discussed games {\em without} coalition structures: we implicitly assume that all players cooperate and 
are willing to divide the value of the grand coalition.
In some settings, however, it may be more efficient to split the players into different teams. The problem of 
finding the best partition of players, which is referred to as the {\em optimal coalition structure generation 
problem}, has thus been extensively studied (see, e.g., the surveys \cite{rahwan,Rahwan2015}).
Formally, a {\em coalition structure} for $N$ 
is a partition $\pi=\{S_1,S_2,\ldots,S_{\ell}\}$ of $N$ into disjoint coalitions.
%For each subset $S \subseteq N$, we define $\pi(S)$ to be the coalition in $\pi$ containing $S$ 
%if such a coalition exists, and $\pi(S)=\emptyset$ otherwise. 
A coalition structure $\pi$ for $N$ is said to be {\em optimal} 
if the {\em social welfare} $\sum_{S \in \pi} v(S)$ is maximized.

Not surprisingly, optimal coalition structure generation is NP-hard even for weighted voting games; 
this can be shown by a straightforward reduction from {\sc Partition} \cite{Chalkiadakis2011}. 
In contrast, we will now argue that if the input game is simple
and its supermodular dependency graph has tree-like structure, this problem 
becomes tractable.
To this end, we introduce the notions of {\em tree decomposition} and {\em treewidth}. 
%Tree decomposition
\begin{definition}\label{def:treedecomposition}
A {\em tree decomposition} of a graph $G$ is a pair $(T,(X_{t})_{t \in V(T)})$, 
where $T$ is a rooted tree and $(X_{t})_{t \in V(T)}$ is a family of subsets of $V(G)$, called {\it bags}, where
\begin{enumerate}
\item[(i)] for every node $i \in V(G)$, the set $X^{-1}(i):=\{\, t \in V(T) \mid i \in X_{t}\,\}$ is nonempty and connected in $T$, and
\item[(ii)] for every edge $\{i,j\} \in E(G)$, there is a node $t \in V(T)$ such that $i,j \in X_{t}$.
\end{enumerate}
\end{definition}
For a node $t$ of $T$, we let $V_t$ be the union of all bags present in the subtree of $T$ rooted at $t$, including $X_t$.
The {\em treewidth} of a tree decomposition $(T,(X_{t})_{t \in V(T)})$ of $G$ is 
$\max_{t \in V(T)}(|X_t|-1)$. 
\iffalse
%%%%%%%%%%%%%%%%%%%%%
A tree decomposition $(T,(X_{t})_{t \in V(T)})$
of a graph $G$ is {\em nice} if for the root $r$ of $T$ we have $X_r=\emptyset$, 
and each node belongs to one of the following types:
\begin{itemize}
\item Leaf: $t$ is a leaf in $T$ and $|X_t|=1$. 
\item Introduce: $t$ has one child $t^{\prime}$, and $X_t =X_{t^{\prime}}\cup \{x\}$ for some $x  \not \in X_{t^{\prime}}$.
\item Forget: $t$ has one child $t^{\prime}$, and $X_t=X_{t^{\prime}} \setminus \{x\}$ for some $x \in X_{t^{\prime}}$.
\item Join: $t$ has two children $t_1,t_2$ such that $X_t=X_{t_1}=X_{t_2}$.
\end{itemize}
%%%%%%%%%%%%%%%%%%%%%%%%%%%%%
\fi

Recall that for simple games it holds that 
minimal winning coalitions form cliques in the supermodular dependency graph. 
We will now argue that this implies 
that there is an optimal coalition structure where each winning coalition 
is contained in some bag of the tree decomposition. 

\begin{lemma}\label{lem:treedecomposition}
Let $(N,v)$ be a simple game and let $(T,(X_{t})_{t \in V(T)})$ be a tree decomposition 
of the supermodular dependency graph $G^+_v$. For every $t \in V(T)$ it holds that
every optimal coalition structure $\pi$ for $V_{t}$ can be transformed 
into another optimal coalition structure $\pi^{\prime}$ 
so that for each coalition $S\in \pi^{\prime}$ the following statements hold: 
\begin{itemize}
\item[(i)] if $S$ contains a player $x$ with $x \in X_t\setminus X_{t^{\prime}}$ 
for every child $t^{\prime}$ of $t$, then $S \subseteq X_t$.
\item [(ii)] $S$ can appear in only one subtree, i.e., 
if $S\cap X_t$ is a subset of both $X_{t_1}$ and $X_{t_2}$ for some children $t_1,t_2$ of $t$, 
then $S \subseteq V_{t_1}$ or $S \subseteq V_{t_2}$.
\end{itemize}
\end{lemma}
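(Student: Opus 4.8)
The plan is to prove the lemma by induction on the structure of the rooted tree decomposition, processing $\pi$ coalition by coalition and moving players around so that the two properties become satisfied while preserving both feasibility (disjointness and covering $V_t$) and optimality (total social welfare). Since the game is simple, $v$ takes values in $\{0,1\}$, so the social welfare is just the number of winning coalitions in the structure; it therefore suffices to ensure that the number of winning coalitions never decreases. The crucial structural fact I would lean on is Corollary~\ref{cor:MWC}: every winning coalition $S$ contains a minimal winning coalition $S^* \subseteq S$, and the players of $S^*$ form a clique in $G^+_v$; by the standard property of tree decompositions that every clique is contained in some bag, the ``core'' $S^*$ of each winning coalition lives entirely inside a single bag $X_{t'}$ for some $t'$ in the subtree rooted at $t$.

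First I would establish property (i). Suppose $S \in \pi$ is winning and contains a player $x$ that lies in $X_t$ but in no bag $X_{t'}$ of a child $t'$ of $t$ (equivalently, $x$ is ``introduced'' at $t$ and does not appear below). Take a minimal winning sub-coalition $S^* \subseteq S$. If $x \notin S^*$, then $S^*$ is already winning, so I can simply discard all players of $S \setminus S^*$ into singleton coalitions (or reassign them arbitrarily among existing coalitions, using monotonicity so that no winning coalition becomes losing) without decreasing welfare, and then replace $S$ by $S^*$; now I am in the case $x \in S^*$. If $x \in S^*$: the clique $S^*$ is contained in some bag $X_{t'}$; by the connectivity condition on $X^{-1}(x)$ in $T$ and the assumption that $x$ appears in $X_t$ but in none of the children's bags, the only bag in the subtree $T_t$ containing $x$ is $X_t$ itself, hence $S^* \subseteq X_t$. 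Thus $S \cap X_t \supseteq S^*$ is winning; I can shrink $S$ to $S^*$ (reassigning leftover players as above) to get $S \subseteq X_t$, as required.

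Next I would handle property (ii). Suppose $S \in \pi$ is winning with $S \cap X_t$ contained in both $X_{t_1}$ and $X_{t_2}$ for distinct children $t_1, t_2$. Again let $S^* \subseteq S$ be a minimal winning sub-coalition; it sits inside a single bag $X_{t'}$ with $t'$ in $T_t$. Every player of $S^*$ that is not in $X_t$ appears only in bags strictly below exactly one child of $t$ (by connectivity of $X^{-1}(\cdot)$ combined with the fact that such a player is absent from $X_t$); combined with $S^* \cap X_t \subseteq S \cap X_t \subseteq X_{t_1} \cap X_{t_2}$, I can argue that $S^*$ must be entirely contained in $V_{t_1}$ or entirely in $V_{t_2}$ — because if $S^*$ had a player strictly below $t_1$ and another strictly below $t_2$, the clique $S^*$ could not fit in one bag. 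Once $S^*$ lies in (say) $V_{t_1}$, I replace $S$ by $S^*$, again reassigning the remaining players of $S \setminus S^*$ to keep the structure a partition of $V_t$ without losing welfare; then $S = S^* \subseteq V_{t_1}$.

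Finally I would stitch these two local repair operations together: starting from an arbitrary optimal $\pi$ for $V_t$, I iterate the operations above over all winning coalitions, each step being welfare-preserving and a partition of $V_t$, and observe that the process terminates (e.g. because the total size of the winning coalitions is non-increasing and eventually every winning coalition equals a minimal winning coalition satisfying both (i) and (ii)); the losing coalitions are irrelevant for welfare and can be set to singletons at the end. The main obstacle I anticipate is the bookkeeping in the ``reassign leftover players'' step: I need to be careful that moving a player out of $S$ into another coalition $S'$ does not turn $S'$ from winning to losing — but by monotonicity of $v$, \emph{adding} players never destroys a winning coalition, so it is always safe to dump leftover players into an existing winning coalition (or, if none is convenient, into fresh singletons), and that resolves the difficulty cleanly.
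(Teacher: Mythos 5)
Your proposal is correct and follows essentially the same route as the paper's proof: replace each winning coalition by a minimal winning sub-coalition (leftovers cost nothing in a simple game), invoke Theorem~\ref{thm:MWC} to see that minimal winning coalitions are cliques of $G^+_v$, and use the tree-decomposition conditions to force each such clique into $X_t$ or a single child subtree, with losing coalitions handled trivially. Two small points to tighten: in (i) the bag containing the clique $S^*$ might lie \emph{above} $t$ rather than in the subtree rooted at $t$, in which case you still get $S^*\subseteq X_t$ by connectivity of the bags containing each member of $S^*\subseteq V_t$ (the paper sidesteps this by arguing pairwise that the edge $\{x,y\}$ would have no bag at all); and leftover players should be parked in fresh singletons rather than dumped into existing winning coalitions, since the latter could re-violate (i) or (ii).
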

\begin{proof}
Consider a node $t \in V(T)$. Note that every optimal coalition structure can be transformed into another 
optimal coalition structure where each winning coalition is minimal. Thus, let $\pi$ be such an optimal 
coalition structure for $V_{t}$. Take an arbitrary coalition  $S \in \pi$. 
If $S$ is losing, then it is clear that $S$ can be divided into $S\cap X_t$ and 
$S \cap (V_{t^{\prime}} \setminus X_t)$ for each child $t^{\prime}$ of $t$ without changing the 
sum $\sum_{S \in \pi}v(S)$, and hence our claims hold. Now, suppose that 
$S$ is a minimal winning coalition. By Theorem \ref{thm:MWC}, $S$ forms a clique.

To show (i), suppose that there is a player $x \in S$ such that $x \in X_t \setminus X_{t^{\prime}}$ for every 
child $t^{\prime}$ of $t$. Assume towards a contradiction that $S \not \subseteq X_t$ and hence 
there is a player $y \in S \setminus X_t$. Observe that no bag $X_w$, $w \in V(T)$, 
contains both $x$ and $y$, since $x$ is not present in 
$X_{t^{\prime}}$ for any successor $t^{\prime}$ of $t$, and $y$ does not appear above $t$. However, since $S$ is a 
clique, $x$ is adjacent to $y$ in the graph $G^{+}_v$, which means that there is a bag containing both $x$ and $y$. 
This contradicts requirement (ii) of Definition~\ref{def:treedecomposition}.

To show (ii), suppose that $S\cap X_t$ is a subset of both $X_{t_1}$ and $X_{t_2}$ for some children $t_1,t_2$ of $t$. 
Assume towards a contradiction that $S \not \subseteq V_{t_1}$ and $S \not \subseteq V_{t_2}$; 
thus there exist a player $x \in S \setminus V_{t_1}$ and a player $y \in S \setminus V_{t_2}$. 
Since the intersection $S\cap X_t$ is fully contained in both $V_{t_1}$ and $V_{t_2}$, 
we have $x, y \not \in X_{t}$, and hence no bag contains both $x$ and $y$. 
However, since $S$ is a clique, $x$ and $y$ are adjacent, contradicting requirement (ii) 
of Definition~\ref{def:treedecomposition}. 
\end{proof}

By Lemma \ref{lem:treedecomposition}, one can find an optimal coalition structure for a simple game by trying all 
possible partitions of each bag and combining them in a bottom-up manner. 
Before we present the proof of Theorem \ref{thm:coalitionstructure}, we need a few auxiliary definitions.

Consider a cooperative game $(N, v)$ and a coalition structure $\pi$ for this game.
For each subset $S \subseteq N$, we define $\pi(S)$ to be the coalition in $\pi$ containing $S$ 
if such a coalition exists, and $\pi(S)=\emptyset$ otherwise. 

A tree decomposition $(T,(X_{t})_{t \in V(T)})$
of a graph $G$ is {\em nice} if for the root $r$ of $T$ we have $X_r=\emptyset$, 
and each node belongs to one of the following types:
\begin{itemize}
\item Leaf: $t$ is a leaf in $T$ and $|X_t|=1$. 
\item Introduce: $t$ has one child $t^{\prime}$, and $X_t =X_{t^{\prime}}\cup \{x\}$ for some $x  \not \in X_{t^{\prime}}$.
\item Forget: $t$ has one child $t^{\prime}$, and $X_t=X_{t^{\prime}} \setminus \{x\}$ for some $x \in X_{t^{\prime}}$.
\item Join: $t$ has two children $t_1,t_2$ such that $X_t=X_{t_1}=X_{t_2}$.
\end{itemize}

We are now ready to prove Theorem \ref{thm:optcoalition}.

\begin{theorem}\label{thm:optcoalition}
Consider a simple game $(N, v)$.
There exists an algorithm that, given oracle access to $v$ and a tree decomposition $(T,(X_{t})_{t \in V(T)})$ 
of the supermodular dependency graph $G^+_v$ with treewidth $\tw$, 
computes an optimal coalition structure in time $O((\tw+1)^{\tw+1}4^{\tw+1}|V(t)|)$.
\end{theorem}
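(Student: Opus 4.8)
The plan is to first replace the given tree decomposition by a \emph{nice} one of the same width (the standard transformation runs in polynomial time and leaves the width unchanged), and then run a bottom-up dynamic program along it. The key preliminary step is a normal form for optimal coalition structures. Because $v$ is monotone and $\{0,1\}$-valued, every subset of a losing coalition is losing, so in any optimal coalition structure we may break each losing coalition into singletons without changing the social welfare, and we may shrink each winning coalition to a minimal winning one. By Corollary~\ref{cor:MWC} a minimal winning coalition is a clique of $G^+_v$, and a clique of a graph is always contained in a single bag of any of its tree decompositions (for each edge of the clique, condition~(ii) of Definition~\ref{def:treedecomposition} gives a common bag, so the subtrees $X^{-1}(i)$ over the vertices $i$ of the clique pairwise intersect, and subtrees of a tree have the Helly property, so they share a common node). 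Combining this with a straightforward inductive application of Lemma~\ref{lem:treedecomposition}, we obtain an optimal coalition structure in which every non-singleton coalition is a winning clique contained in some bag and, at every node $t$, each such clique lies entirely within the subtree rooted at one child of $t$. The dynamic program searches only over structures of this form.

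The DP table is indexed by a node $t$ of $T$ together with a \emph{state} $(\mathcal{P},\ell)$, where $\mathcal{P}$ is a partition of the bag $X_t$ into blocks, each a clique of $G^+_v$, and $\ell$ labels every block \textsf{done} or \textsf{act}. A \textsf{done} block is the portion lying in $X_t$ of a coalition that has already been certified winning---all of it appeared together in some bag at a descendant of $t$, with $v$-value $1$---while an \textsf{act} block is the portion lying in $X_t$ of a coalition still being assembled (in particular each losing singleton is an \textsf{act} singleton). The entry $c[t,(\mathcal{P},\ell)]$ is the maximum number of certified-winning coalitions over partial coalition structures on $V_t$ whose trace on $X_t$ equals $(\mathcal{P},\ell)$. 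I would then give the recurrences for the four node types of a nice decomposition. A leaf carries only the trivial states on its one-element bag. At an \emph{introduce} node adding $x$, the vertex $x$ either starts a new \textsf{act} singleton block or is appended to an existing \textsf{act} block $B$ with $B\cup\{x\}$ still a clique of $G^+_v$; in either case, if the resulting block $S$ satisfies $v(S)=1$ we may relabel it \textsf{done} and add $1$ to the count. At a \emph{forget} node removing $x$, we keep only child states in which $x$'s block, if \textsf{act}, equals $\{x\}$---by the normal form any winning clique containing $x$ is confined to a single bag, necessarily one lying in the subtree below this forget node, so it is already \textsf{done} by the time $x$ is forgotten---then delete $x$, leaving the rest of a \textsf{done} block still \textsf{done}.

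The \emph{join} node is the step I expect to be the main obstacle. At a join node $t$ with children $t_1,t_2$ we have $X_t=X_{t_1}=X_{t_2}$ and $V_{t_1}\cap V_{t_2}=X_t$, and a clique certified inside one subtree can resurface---as an uncertified fragment---among the bag vertices handled by the other subtree, so the task is to combine the children's tables without double-counting such cliques and without a running-time blow-up. Lemma~\ref{lem:treedecomposition}(ii) is exactly what makes this possible: every coalition meeting $X_t$ lies entirely within one of the two subtrees, so each block of the parent state originates in a single child and no blocks have to be merged across children. Hence I would define $c[t,(\mathcal{P},\ell)]$ as the maximum of $c[t_1,(\mathcal{P},\ell_1)]+c[t_2,(\mathcal{P},\ell_2)]$ over pairs of child states on the same partition $\mathcal{P}$ for which each \textsf{done} block of $\mathcal{P}$ is \textsf{done} in exactly one of $\ell_1,\ell_2$ (and \textsf{act} in the other), while each \textsf{act} block is \textsf{act} in both; this counts every winning clique exactly once, in the subtree that owns it. Correctness then follows by bottom-up induction on $T$ using the normal form, the optimum being $c[r,(\emptyset,\emptyset)]$ at the root $r$ (whose bag is empty), and a standard backtrack recovers an optimal coalition structure. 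For the running time, a bag of size at most $\tw+1$ has at most $(\tw+1)^{\tw+1}$ partitions, each admitting at most $2^{\tw+1}$ labelings, so at most $(\tw+1)^{\tw+1}2^{\tw+1}$ states per node; processing a join node costs $O(2^{\tw+1})$ per state to enumerate the consistent pairs $(\ell_1,\ell_2)$ (the other node types are cheaper), and each state update performs $O(1)$ oracle calls to $v$; summing over the nodes of the nice decomposition yields the claimed bound $O((\tw+1)^{\tw+1}4^{\tw+1}|V(T)|)$.
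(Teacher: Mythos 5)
Your proposal is correct and takes essentially the same route as the paper: convert to a nice tree decomposition and run a bottom-up dynamic program whose states are partitions of the current bag with a binary per-block flag, with correctness resting on the fact that minimal winning coalitions are cliques of $G^+_v$ (hence contained in a bag and confined to a single subtree at join nodes), yielding the same $O((\tw+1)^{\tw+1}4^{\tw+1}|V(T)|)$ bound. The only difference is bookkeeping: you count certified winning coalitions with \textsf{done}/\textsf{act} labels after normalizing losing coalitions to singletons, whereas the paper stores the partial welfare $\opt[t,\pi,\pi_{\out}]$ and marks in $\pi_{\out}$ which blocks may still be extended.
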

\begin{proof}
First, recall that in ${\tw}^2n$ time we can transform a given tree decomposition of treewidth $\tw$ with $n$ 
nodes into a nice one with the same treewidth $\tw$ and $O(\tw \cdot n)$ nodes \cite{Cygan2015}. 
In what follows, let $(T,(X_{t})_{t \in V(T)})$ denote such a decomposition.

Now, we give a dynamic program over the tree decomposition as follows. 
For each node $t \in V(T)$, each coalition structure $\pi$ of $X_t$, and each subset $\pi_{\out}$ of $\pi$, 
we define $\opt[t,\pi,\pi_{\out}]$ to be the maximum value $\sum_{S \in \pi^{*}}v(S)$ 
such that $\pi^{*}$ is a coalition structure of $V_t$ where all the coalitions in $\pi$ are {\em extended} in 
$\pi^*$ without changing the coalitions in $\pi \setminus \pi_{\out}$, i.e., for all $S \in \pi$, $S \subseteq 
\pi^{*}(S)$, and for all $S \in \pi \setminus \pi_{\out}$, $S=\pi^{*}(S)$. Starting from the leaves and going up to 
the root, we will fill out the dynamic programming table. The case where $t$ is a leaf corresponds to the base case 
of the recurrence; we then compute values for a non-leaf node $t$ based on the values for the children of $t$. 
We will finally obtain $\opt[r,\{\emptyset\},\emptyset]$, which is the value 
we want to compute.

{\em Leaf}: If $t$ is a leaf node, then we have only one value 
$\opt[t,\{X_t\},\{X_t\}]=\opt[t,\{X_t\},\emptyset]=v(X_t)$.

{\em Introduce}: Suppose $t$ is an introduce node with child $t^{\prime}$ such that $X_t=X_{t^{\prime}}\cup \{x\}$. 
Let $S_x$ be a coalition in $\pi$ containing $x$, and let 
$\pi^{\prime}=(\pi\setminus \{S_x\}) \cup \{S_x\setminus \{x\}\}$. 
We claim that $\opt[t,\pi,\pi_{\out}]$ is given by the value of an optimal coalition structure of the 
subtree rooted at the child $t^{\prime}$ and the marginal contribution of $x$ to the coalition $S_x\setminus \{x\}$, 
namely,
\[
\opt[t,\pi,\pi_{\out}]=\opt[t^{\prime},\pi^{\prime},\pi^{\prime}_{\out}]+(v(S_x)-v(S_x\setminus \{x\})),
\]
where $\pi^{\prime}_{\out}=\pi_{\out}\setminus \{S_x\}$.

To see this, let $\pi^{*}$ be a partition of $V_t$ that attains the maximum in the definition of 
$\opt[t,\pi,\pi_{\out}]$ and satisfies condition (i) of Lemma~4. 
Since $x \in X_t\setminus X_{t^{\prime}}$, we have $S_x \subseteq X_t$ and hence $S_x \in \pi^{*}$.
Then, it follows that $\pi^{**}=(\pi^{*} \setminus \{S_x\})\cup \{S_x\setminus \{x\}\}$ 
is a partition considered in the definition of $\opt[t^{\prime},\pi^{\prime},\pi^{\prime}_{\out}]$, 
and we have $\sum_{X \in \pi^{**}}v(X) \leq \opt[t^{\prime},\pi^{\prime},\pi^{\prime}_{\out}]$. Hence, 
\begin{align*}
&\opt[t,\pi,\pi_{\out}]= \sum_{X \in \pi^{**}}v(X)+(v(S_x)-v(S_x\setminus \{x\}))\\
&\leq \opt[t^{\prime},\pi^{\prime},\pi^{\prime}_{\out}]+(v(S_x)-v(S_x\setminus \{x\})). 
\end{align*}

Conversely, let $\pi^{*}$ be a partition of $V_{t^{\prime}}$ that attains the maximum in the definition of 
$\opt[t^{\prime},\pi^{\prime},\pi^{\prime}_{\out}]$. Then, since $S_x\setminus \{x\} \in \pi^{\prime}$, but 
$S_x\setminus \{x\} \not \in \pi^{\prime}_{\out}$, the coalition $S_x\setminus \{x\}$ remains the same in $\pi^*$, 
i.e., $S_x\setminus \{x\} \in \pi^{*}$; thus, 
partition $\pi^{**}=(\pi^{*}\setminus \{S_x\setminus \{x\}\}) \cup \{S_x\}$ 
is considered in the definition of $\opt[t,\pi,\pi_{\out}]$, and
\begin{align*}
&\opt[t,\pi,\pi_{\out}] \geq \sum_{X \in \pi^{**}}v(X),\\
&= \sum_{X \in \pi^{*}}v(X)+(v(S_x)-v(S_x\setminus \{x\})),\\
&= \opt[t^{\prime},\pi^{\prime},\pi^{\prime}_{\out}]+(v(S_x)-v(S_x\setminus \{x\})).
\end{align*}

{\em Forget}: Suppose $t$ is a forget node with child $t^{\prime}$ such that $\{x\}=X_{t^{\prime}}\setminus X_t$. 
Then, it can be verified that $\opt[t,\pi,\pi_{\out}]$ is given by
\[
\max\{\, \opt[t^{\prime},\pi^{S},\pi^S_{\out}] \mid S \in \pi_{\out}\cup\{\emptyset\} \,\},
\]
where $\pi^{S}=(\pi\setminus \{S\}) \cup \{S\cup \{x\}\}$ and $\pi^S_{\out}=(\pi_{\out}\setminus \{S\}) \cup \{S\cup \{x\}\}$ for $S \in \pi_{\out}\cup \{\emptyset\}$. 

{\em Join}: Finally, suppose that $t$ is a join node with children $t_1,t_2$ such that $X_t=X_{t_1}=X_{t_2}$. Then,
$\opt[t,\pi,\pi_{\out}]$ is the maximum value of 
\[
\opt[t_1,\pi,\pi^{1}_{\out}] + \opt[t_2,\pi,\pi^{2}_{\out}]-\sum_{S \in \pi}v(S)
\]
over all the pairs $(\pi^{1}_{\out},\pi^{2}_{\out})$ where $\pi_{\out}$ is a disjoint union of $\pi^{1}_{\out}$ and $\pi^{2}_{\out}$; intuitively, each $\pi^{i}_{\out}$ specifies how a coalition in $\pi_{\out}$ will be extended to subtrees $V_{t_1}$ and $V_{t_2}$.

To show the correctness of our algorithm, 
let $\pi^{*}$ be a partition of $V_t$ that attains the maximum in the definition of $\opt[t,\pi,\pi_{\out}]$ 
and satisfies condition (ii) of Lemma~4.

Let $\pi_{\mathrm{in}}$ 
be the set of coalitions in $\pi$ that remain the same, 
i.e., $\pi_{\mathrm{in}}=\pi\setminus \pi_{\out}$. 
Observe that by condition (ii) of Lemma~4, each coalition $S \in \pi_{\out}$ 
either remains the same in $\pi^*$ or is extended to a subtree rooted at $t_1$ or $t_2$, 
that is, $\pi^*(S) \subseteq X_t$ or $\pi^*(S) \subseteq V_{t_{i}}$ for some $i=1,2$. 
We now divide $\pi_{\out}$ into two families $\pi^{1}_{\out}$ and $\pi^{2}_{\out}$ depending on how $S \in \pi_{\out}$ has been extended in $\pi^*$; specifically, we let $\pi^{1}_{\out}$ and $\pi^{2}_{\out}$ be subsets of $\pi_{\out}$ such that $\pi_{\out}$ can be represented as a disjoint union of $\pi^{1}_{\out}$ and $\pi^{2}_{\out}$, and
\begin{itemize}
\item $\pi^{1}_{\out}$ includes all the coalitions $S$ in $\pi_{\out}$ such that 
$\pi^{*}$(S) is contained in $V_{t_1}$ only, i.e., $\pi^*(S) \subseteq V_{t_{1}}$ and $\pi^*(S) \not \subseteq V_{t_2}$; and
\item $\pi^{2}_{\out}$ includes all the coalitions $S$ in $\pi_{\out}$ such that 
$\pi^{*}$(S) is contained in $V_{t_2}$ only, i.e., $\pi^*(S) \subseteq V_{t_{2}}$ and $\pi^*(S) \not \subseteq V_{t_1}$.
\end{itemize}
Let $\pi^{1}_{\mathrm{in}} = \pi_{\mathrm{in}} \cup \pi^{2}_{\out}$, 
    $\pi^{2}_{\mathrm{in}} = \pi_{\mathrm{in}} \cup \pi^{1}_{\out}$. For $i=1, 2$, set
\[
\pi^{i}=\pi^{i}_{\mathrm{in}} \cup \{\, \pi^*(S) \mid S \in \pi^{i}_{\out} \,\} 
\cup \{\, S \in \pi^* \mid S \subseteq V_{t_i} \setminus X_t \,\}.
\]
Note that each $\pi^{i}$ consists of $\pi_{\mathrm{in}}$, the coalitions $\pi^*(S) \cap X_{t}$ with $S \in \pi_{\out}$ being extended to another subtree, the coalitions in $\pi^*$ that traverse both $X_{t}$ and $V_{t_i} \setminus X_t$, and the coalitions in $\pi^*$ that are fully contained in $V_{t_i} \setminus X_t$; hence, it can be easily verified that $\pi^{i}$ is a partition of $V_{t_i}$. Now, the optimal value $\opt[t,\pi,\pi_{\out}]$ can be written as follows:
\begin{align*}
&\opt[t,\pi,\pi_{\out}]\\
&= \sum_{S \in \pi_{\mathrm{in}}}v(S)+\sum_{S \in \pi^*\setminus \pi_{\mathrm{in}} }v(S)\\
&=2\sum_{S \in \pi_{\mathrm{in}}}v(S)+\sum_{S \in \pi_{\out}}v(S) + \sum_{S \in  \pi^*\setminus \pi_{\mathrm{in}}}v(S) - \sum_{S \in \pi}v(S)\\
&=\sum_{S \in \pi^1_{\mathrm{in}}}v(S) + \sum_{S \in \pi^2_{\mathrm{in}}}v(S)+ \sum_{S \in  \pi^*\setminus \pi_{\mathrm{in}}}v(S) - \sum_{S \in \pi}v(S)\\
&=\sum_{S \in \pi^1}v(S) + \sum_{S \in \pi^2}v(S) - \sum_{S \in \pi}v(S)\\
&\leq \opt[t_1,\pi,\pi^{1}_{\out}] + \opt[t_2,\pi,\pi^{2}_{\out}]-\sum_{S \in \pi}v(S).
 \end{align*}
Conversely, let $\pi^{1}_{\out}$ and $\pi^{2}_{\out}$ be subsets of $\pi_{\out}$ such that $\pi_{\out}$ can be 
represented as a disjoint union of $\pi^{1}_{\out}$ and $\pi^{2}_{\out}$. Suppose that for each $i=1,2$, $\pi^{i}$ 
is a partition of $V_{t_i}$ that attains the maximum in the definition of $\opt[t_i,\pi,\pi^i_{\out}]$.
Let $\pi_{\mathrm{in}}$ be the set of coalitions in $\pi$ that should be preserved, 
i.e., $\pi_{\mathrm{in}}=\pi\setminus \pi_{\out}$, 
and let $\pi^{i}_{\mathrm{in}}$ be the set of coalitions in $\pi$ that remain the same in $\pi^i$, 
i.e., $\pi^{i}_{\mathrm{in}}=\pi \setminus\pi^{i}_{\out}$ for each $i=1,2$. We define
\[
\pi^*=\pi_{\mathrm{in}} \cup (\pi^1 \setminus \pi^1_{\mathrm{in}}) \cup (\pi^2 \setminus \pi^2_{\mathrm{in}}).  
\]
That is, $\pi^*$ consists of $\pi_{\mathrm{in}}$, the coalitions in $\pi^i$ that have been extended in each 
$V_{t_i}$, and the coalitions in $\pi^i$ that are fully contained in $V_{t_i} \setminus X_t$; hence, it can be 
verified that $\pi^*$ is a partition of $V_t$. By a similar calculation as above,
\begin{align*}
&\opt[t_1,\pi,\pi^{1}_{\out}] + \opt[t_2,\pi,\pi^{2}_{\out}]-\sum_{S \in \pi}v(S)\\
&=\sum_{S \in \pi^1}v(S) + \sum_{S \in \pi^2}v(S) - \sum_{S \in \pi}v(S)\\
&=\sum_{S \in \pi^1_{\mathrm{in}}}v(S) + \sum_{S \in \pi^2_{\mathrm{in}}}v(S)+ 
\sum_{S \in \pi^*\setminus \pi_{\mathrm{in}}}v(S) - \sum_{S \in \pi}v(S)\\
&=2\sum_{S \in \pi_{\mathrm{in}}}v(S)+\sum_{S \in \pi_{\out}}v(S) + \sum_{S \in \pi^*\setminus \pi_{\mathrm{in}}}v(S) 
- \sum_{S \in \pi}v(S)\\
&= \sum_{S \in \pi_{\mathrm{in}}}v(S)+\sum_{S \in \pi^*\setminus \pi_{\mathrm{in}}}v(S)\\
&=  \sum_{S \in \pi^*}v(S) \leq \opt[t,\pi,\pi_{\out}].
 \end{align*}
The overall running time of our algorithm is 
$$
O((\tw+1)^{\tw+1}4^{\tw+1}|V(t)|),
$$ 
since the size of the dynamic programming table is $O((\tw+1)^{\tw+1}2^{\tw+1}|V(t)|)$ 
and each entry can be filled in $O(2^{\tw+1})$ time. This completes the proof.
\end{proof}

\citet{Voice2012} also studied the problem of finding an optimal coalition structure for a graph-restricted 
instance, and designed a $O(\tw^{\tw+O(1)}n)$ algorithm for a graph with treewidth $\tw$ when players disconnected on 
a graph do not depend on each other. In our setting, the result of \citet{Voice2012} translates into an efficient algorithm for 
finding an optimal coalition structure for games whose dependency graph has bounded treewidth. However, this 
result does not imply ours, since Theorem \ref{thm:optcoalition} is for games whose supermodular dependency graph 
has bounded treewidth, and the supermodular dependency graph is a subgraph of the dependency graph.

Recall that a chordal graph always admits a clique-tree decomposition, i.e., a tree decomposition where each bag 
forms a maximal clique; moreover, such a decomposition can be found in linear time \cite{Berry2017}. Hence 
the treewidth of a chordal graph is bounded by the maximum degree of the graph, 
and we have the following corollary.

\begin{corollary}\label{FPT:optcoalition}
There exists an algorithm that, given a weighted voting game $[N;\bfw;q]$ and its supermodular dependency graph, 
computes an optimal coalition structure in time $O((p+1)^{p+1}4^{p+1}(p+1)n)$. 
\end{corollary}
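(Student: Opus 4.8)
The plan is to obtain the corollary by feeding a well-chosen tree decomposition into Theorem~\ref{thm:optcoalition}. A weighted voting game is in particular a simple game, and its characteristic function is computable in linear time, so the oracle hypothesis of Theorem~\ref{thm:optcoalition} is satisfied for free; the only thing we need to supply is a tree decomposition of $G^+_v$ whose width is $O(p)$ and whose number of nodes is $O(n)$. By Theorem~\ref{thm:chordal:supermodular}, $G^+_v$ is chordal, and a chordal graph on $n$ vertices admits a clique-tree decomposition --- a tree decomposition whose bags are exactly the maximal cliques of the graph --- which can be computed in linear time~\cite{Berry2017}; since a chordal graph has at most $n$ maximal cliques, this decomposition has at most $n$ nodes.

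Next I would bound the width of this decomposition. Suppose $G^+_v$ contains a clique $K$ with $|K|=s$. By definition of an edge of $G^+_v$, every vertex of $K$ positively depends on each of the other $s-1$ vertices of $K$, so each vertex of $K$ has supermodular degree at least $s-1$, whence $s-1\le p$, i.e.\ $s\le p+1$. (This is exactly the statement that the treewidth of a chordal graph is at most its maximum degree, applied to $G^+_v$ whose maximum degree is $p$.) Consequently every bag of the clique-tree decomposition has size at most $p+1$, so its treewidth is at most~$p$.

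Finally I would invoke Theorem~\ref{thm:optcoalition} with this decomposition, taking $\tw=p$. Following the proof of that theorem, the clique-tree decomposition (with at most $n$ bags) is first turned into a nice tree decomposition of the same width $p$ and with $O((p+1)n)$ nodes, after which the dynamic program runs in time $O((\tw+1)^{\tw+1}4^{\tw+1}|V(T)|)$ with $\tw=p$ and $|V(T)|=O((p+1)n)$, giving $O((p+1)^{p+1}4^{p+1}(p+1)n)$ as claimed. There is essentially no hard step here; the only points needing a line of care are (a) the clique-size bound $s\le p+1$ above, which is what converts the degree bound into a width bound, and (b) the observation that a chordal graph has at most $n$ maximal cliques, which is what keeps the dependence on $n$ linear (with only an extra factor $p+1$ coming from the passage to a nice decomposition) rather than worse.
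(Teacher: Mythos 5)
Your proposal is correct and follows essentially the same route as the paper: use Theorem~\ref{thm:chordal:supermodular} to get chordality of $G^+_v$, take a clique-tree decomposition (computable in linear time), bound its width by the maximum degree $p$ since each bag is a clique, and plug $\tw=p$ into Theorem~\ref{thm:optcoalition}. The only difference is that you spell out the clique-size bound and the node-count bookkeeping, which the paper leaves implicit.
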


Our FPT result with respect to the supermodular degree does not extend to arbitrary simple games;
indeed, for such games we obtain a hardness result that holds even if the supermodular degree 
is bounded by a constant (see the supplemental material for the proof).

\begin{theorem}\label{NPh:optcoalition}
Finding an optimal coalition structure of a simple game is {\em NP}-hard even if the supermodular degree is at most $6$.
\end{theorem}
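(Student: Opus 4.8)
The plan is to reduce from a suitable NP-hard partition-style problem in such a way that the gadget pieces have small supermodular degree. Since optimal coalition structure generation is already NP-hard for weighted voting games via \textsc{Partition}, but weighted voting games have no bound on supermodular degree, the first step is to pick a source problem whose instances are inherently ``sparse''. A natural choice is \textsc{Exact Cover by 3-Sets} (X3C), or a bounded-occurrence variant such as \textsc{3-Dimensional Matching} where each element lies in a bounded number of triples; these are NP-hard and every element participates in only constantly many constraints, which is exactly what will keep the supermodular degree bounded. Alternatively one can reduce from a restricted \textsc{Set Packing}/\textsc{Set Cover} where set sizes and element frequencies are constant.

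The construction I would use: for each element of the ground set create a player, and for each admissible 3-set (triple) create a ``winning pattern'' by declaring exactly those coalitions that are unions of disjoint admissible triples to be the minimal winning coalitions of value~$1$; everything else monotonically closed upward. Concretely, set $v(S)=1$ iff $S$ contains some admissible triple, so the minimal winning coalitions are precisely the admissible triples. By Corollary~\ref{cor:MWC}, two players positively depend on each other iff they co-occur in some admissible triple, so the supermodular degree of a player equals the number of \emph{distinct other elements} appearing together with it in some triple. If each element occurs in at most~$2$ triples, this is at most~$2\cdot 2 = 4$; with a slightly more generous occurrence bound one gets~$6$, matching the theorem statement. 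The key point is that an optimal coalition structure packs as many disjoint winning triples as possible (each contributing~$1$, with leftover singletons contributing~$0$), so the maximum social welfare equals the maximum number of pairwise-disjoint admissible triples, i.e.\ the maximum 3-set packing; choosing the instance so that a perfect packing exists iff the original X3C/3DM instance is a ``yes'' instance completes the reduction.

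The steps in order: (1) fix the source problem and its occurrence bound so that the induced supermodular degree is at most~$6$; (2) define $N$ and the characteristic function $v$ via its minimal winning coalitions, and check monotonicity and that $v(\emptyset)=0$; (3) invoke Theorem~\ref{thm:MWC}/Corollary~\ref{cor:MWC} to compute $D^+(i)$ exactly and verify $p\le 6$; (4) prove the reduction is correct, i.e.\ the optimal social welfare equals the maximum number of disjoint admissible triples, and that this hits the target value iff the source instance is a ``yes''-instance; (5) note the reduction is polynomial and that $v$ is polynomial-time computable (membership ``does $S$ contain an admissible triple'' is checkable in polynomial time), so it fits the paper's computational model. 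The main obstacle I anticipate is \textbf{tuning the occurrence/size parameters} so that the supermodular degree lands at exactly~$6$ rather than some larger constant: one must be careful because a player's supermodular neighbours are \emph{all} elements sharing \emph{any} triple with it, so even a modest per-element occurrence bound can inflate $p$; this may require either choosing a specifically structured NP-hard instance family (e.g.\ 3DM with degree-$2$ or degree-$3$ occurrences, whose NP-hardness is known) or padding the construction with dummy structure to rebalance degrees without changing the optimal packing value.
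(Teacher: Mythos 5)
Your proposal matches the paper's proof essentially exactly: the paper reduces from X3C restricted to instances where each element occurs in at most $3$ triples (a known NP-complete restriction), takes $N=X$ with $v(S)=1$ iff $S$ contains some triple of $\calS$, invokes Theorem~\ref{thm:MWC} to conclude that each player positively depends on at most $3\cdot 2=6$ others, and observes that the optimal social welfare is $n$ iff an exact cover exists. The ``tuning'' obstacle you flag is thus resolved exactly as you suggest, by the bounded-frequency version of X3C.
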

\begin{proof}
We reduce from the NP-complete problem {\sc Exact-3-Cover} (X3C) \cite{garey}. 
Given a set of elements $X = \{x_1,x_2,...,x_{3n}\}$ and a family 
$\calS = \{S_1,S_2,...,S_m\}$ of three-element subsets of $X$, this problem 
asks whether $X$ can be covered by $n$ sets from $\calS$. This problem remains NP-complete if for each element 
$x \in X$ its frequency $p_x =|\{\, S \in \calS \mid x \in S\,\}|$ is at most $3$.

Given a set of elements $X = \{x_1,x_2,...,x_{3n}\}$ and a family $\calS = \{S_1,S_2,...,S_m\}$ of three-element 
subsets of $X$ where $p_x\le 3$ for each $x \in X$, we construct a simple game $G=(N, v)$ 
where $N=X$, and for each subset $S \subseteq N$, we set $v(S)=1$ if there exists 
a set $S_j\in \calS$ such that $S_j \subseteq S$, and $v(S)=0$ otherwise. 
Notice that every player belongs to at most three minimal winning coalitions and hence positively 
depends on at most six players by Theorem~3. It is immediate that an optimal coalition 
structure has value $n$ if and only if $X$ can be covered by $n$ sets from $\calS$.
\end{proof}

%%%%%%%%%%%%%%%%%%%%%%%%%%%%%%%%%%%%%%%%%%%%%%%%%%%%%%%%%%%%%%%%%%%

\section{Constructing Dependency Graphs}
In this section, we explore another question: if the dependency graph is not given to us as input, 
how hard is it to compute the dependency/supermodular degree? 

An interesting class of games where constructing dependency graphs is easy is {\em induced subgraph games}.
Formally, an induced subgraph game with a set of players $N$ is given by an undirected graph $G=(N,E)$ 
and a weight function $w:E \rightarrow \bbR$. For each $S\subseteq N$ 
the value $v(S)$ of a coalition $S$ is given by the 
sum of edge weights in the graph induced by $S$. In such games, player $i$ 
depends on player $j$ if and only if the edge $(i, j)$ has non-zero weight; 
furthermore, their dependence is positive if and only if the weight is positive.

However, for general games constructing the dependency graph is not easy.
As we have seen before, a dummy player corresponds to an isolated vertex of 
both dependency and supermodular dependency graphs; deciding whether a player is a dummy is NP-hard, 
for instance, in weighted voting games \cite{Chalkiadakis2011} or 
in threshold network flow games \cite{Bachrach2008}. This immediately implies 
the hardness of constructing a supermodular dependency graph or a dependency graph for
these classes of games. 
On the positive side, for weighted voting games we can determine if there is a dependency
between players in pseudo-polynomial time.

\begin{theorem}
Given a weighted voting game $[N;\bfw;q]$ with integer weights
and players $i,j \in N$ with $i \neq j$, one can decide whether $i$ 
depends on $j$ in time polynomial in $n$ and $w_{\max}$, 
where $w_{\max}$ is the maximum weight among the players.
\end{theorem}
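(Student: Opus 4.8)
The plan is to reduce the question ``does $i$ depend on $j$?'' to a collection of subset-sum / knapsack-style feasibility checks that can be solved by standard dynamic programming in pseudo-polynomial time. By definition, $i$ depends on $j$ precisely when there is a coalition $S\subseteq N\setminus\{i,j\}$ with $v(i\mid S\cup\{j\})\neq v(i\mid S)$. In a weighted voting game $[N;\bfw;q]$, the marginal contribution $v(i\mid S)$ equals $1$ exactly when $w(S)<q\le w(S)+w_i$, and is $0$ otherwise; similarly for $v(i\mid S\cup\{j\})$ with $w(S)$ replaced by $w(S)+w_j$. So the dependence condition on the weight $s:=w(S)$ of $S$ becomes a Boolean combination of the threshold conditions $s<q$, $s+w_i\ge q$, $s+w_j\ge q$, $s+w_i+w_j\ge q$. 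Working out when these two indicator values differ yields a constant number of integer intervals $[\ell_1,r_1],\dots,[\ell_c,r_c]$ for $s$ (for instance, assuming $w_j\le w_i$, the relevant case is $q-w_i-w_j\le s< q-w_i$: here $v(i\mid S)=0$ but $v(i\mid S\cup\{j\})=1$). Thus $i$ depends on $j$ if and only if some subset $S\subseteq N\setminus\{i,j\}$ has weight $w(S)$ landing in one of these $O(1)$ target intervals.

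The second step is to decide reachability of a target interval by a subset sum over the multiset $\{w_k : k\in N\setminus\{i,j\}\}$. Let $M=\sum_{k\neq i,j} w_k \le (n-2)w_{\max}$. Build the standard Boolean table $R[k][s]$ = ``some subset of the first $k$ eligible players has total weight exactly $s$'', for $s=0,1,\dots,M$, via the recurrence $R[k][s]=R[k-1][s]\vee R[k-1][s-w_k]$; this runs in time $O(n\cdot M)=O(n^2 w_{\max})$, which is polynomial in $n$ and $w_{\max}$. Then $i$ depends on $j$ iff $R[n-2][s]$ is true for some $s$ in one of the computed target intervals, which is a single pass over the last row. (One can note, using that the dependence relation is symmetric, that it does not matter which of $i,j$ plays the ``contributor'' role; but the case analysis above is easiest done directly, splitting on whether $w_i\ge w_j$ or not.)

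The only mild subtlety — and the part I would write out carefully rather than the dynamic program — is the case analysis translating the combinatorial condition ``$v(i\mid S\cup\{j\})\neq v(i\mid S)$'' into explicit integer intervals for $w(S)$, making sure that (a) all corner cases with $q\le 0$, empty intervals, or $w_i=0$ or $w_j=0$ are handled, and (b) one correctly observes that the constraint $S\cap\{i,j\}=\emptyset$ is exactly why we run the subset-sum over $N\setminus\{i,j\}$ and not over all of $N$. None of this is hard; it is just bookkeeping. I do not expect a genuine obstacle: the whole statement is essentially the observation that dependence in a weighted voting game is a subset-sum question, and subset-sum is the textbook pseudo-polynomial dynamic program. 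The running time is $O(n^2 w_{\max})$, i.e., polynomial in $n$ and $w_{\max}$ as claimed.
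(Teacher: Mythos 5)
Your proposal is correct and follows essentially the same route as the paper: translate the condition $v(i\mid S\cup\{j\})\neq v(i\mid S)$ into membership of $w(S)$ in a constant number of integer intervals, then decide each interval by a pseudo-polynomial subset-sum computation over $N\setminus\{i,j\}$ (your dynamic program is in fact spelled out in more detail than the paper's, which simply invokes {\sc Subset Sum}). One caution on the case analysis you deferred: the interval you exhibit, $q-w_i-w_j\le w(S)<q-w_i$ (i.e., $S\cup\{i,j\}$ wins while $S\cup\{i\}$ and $S\cup\{j\}$ lose), is only half of the dependence condition; you must also check the interval $\max\{q-w_i,\,q-w_j\}\le w(S)<q$, where $S\cup\{i\}$ and $S\cup\{j\}$ win but $S$ loses, so that $j$'s presence cancels $i$'s marginal contribution --- this case can occur even when $w_j\le w_i$ and is precisely what distinguishes dependence from positive dependence (cf.\ player $1$ depending on player $3$ in Example~\ref{ex:dependency}). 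With both families of intervals included, your argument coincides with the paper's conditions (a) and (b) and the claimed $O(n^2 w_{\max})$-type bound goes through.
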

\begin{proof}
We need to check if there is a coalition $S \subseteq N \setminus \{i,j\}$ 
such that 
(a) coalition $S\cup \{i,j\}$ is winning, 
but coalitions $S\cup \{i\}$ and $S\cup \{j\}$ are losing, or 
(b) coalitions $S\cup \{i\}$ and $S\cup \{j\}$ are winning, but $S$ is losing. 
Note that condition (a) is equivalent to $q-(w_i+w_j) \leq w(S) < \min \{q-w_i,q-w_j \}$, 
and condition (b) is equivalent to $\max \{q-w_i,q-w_j \} \leq w(S) <q$.
Both conditions can be checked by considering at most $w_{\max}$ weights, and, for each
weight, checking if $N\setminus\{i, j\}$ contains a coalition of that weight;
the latter question is effectively an instance of {\sc Subset Sum}
and hence can be answered in pseudo-polynomial time.
\end{proof}

One can also ask if it is possible to construct the dependency graph 
by evaluating the characteristic function at polynomially many points.
However, perhaps unsurprisingly, the query complexity of the dependency graph
turns out to be exponential in the number of players.

\begin{theorem}
Any algorithm that computes the dependency graph of a simple game must evaluate 
the characteristic function in at least  $\binom{n}{n/2}$ points in the worst case.
\end{theorem}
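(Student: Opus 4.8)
The plan is to exhibit a large family of simple games that all ``look the same'' on a substantial collection of query points but have different dependency graphs, so that any deterministic algorithm making few queries cannot distinguish them. Concretely, I would take $n$ even and consider the layer of coalitions of size exactly $n/2$; there are $\binom{n}{n/2}$ of them. For each such set $T$ of size $n/2$, define a simple game $v_T$ whose minimal winning coalitions are all sets of size $n/2+1$, \emph{except} that the particular set $T$ itself is declared winning (so it is a minimal winning coalition) while its ``twin'' behaviour elsewhere is unchanged. More precisely: let $v_T(S)=1$ iff $|S|\ge n/2+1$, or $S=T$, and $v_T(S)=0$ otherwise. One checks $v_T$ is monotone (adding a player to a set of size $\ge n/2+1$ keeps it winning; adding a player to $T$ gives a set of size $n/2+1$, hence winning), so $v_T$ is a legitimate simple game.

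Next I would compare the dependency graph of $v_T$ to that of the ``baseline'' game $v_\emptyset$ with $v_\emptyset(S)=1$ iff $|S|\ge n/2+1$. In the baseline game every pair of players lies in a common minimal winning coalition (any $(n/2+1)$-set containing both), so by Theorem~\ref{thm:MWC} the dependency graph is the complete graph $K_n$. For $v_T$, the extra minimal winning coalition $T$ does not remove any edges, so $G_{v_T}=K_n$ as well --- meaning this particular family does not yet separate the graphs. So the construction must be adjusted: instead I would make the \emph{non-}edges carry the information. Take $v_T(S)=1$ iff $|S|\ge n/2+1$ \emph{and} $S\ne \overline{T}$ is irrelevant; rather, the cleanest route is to use games where exactly one potential edge is toggled. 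Fix two designated players $a,b$. For a set $T$ of size $n/2$ with $a,b\notin T$, let the game $v_T$ have as winning coalitions all sets of size $\ge n/2+1$ together with $T\cup\{a\},T\cup\{b\}$, but arrange (by also declaring $T\cup\{a,b\}$ and all its supersets losing unless size forces them winning) a discrepancy detectable only by querying inside the size-$n/2$ layer. Because the subtlety here is choosing the family so that (i) all games are simple, (ii) their dependency graphs genuinely differ, and (iii) the differences are only revealed by queries to an exponentially large ``hard'' set of coalitions, \textbf{this combinatorial design is the main obstacle}; the adversary/indistinguishability wrapper around it is routine.

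Granting such a family $\{v_T\}$ of size $\binom{n}{n/2}$ in which the dependency graph of $v_T$ encodes $T$, I would run the standard adversary argument: fix any deterministic algorithm $\mathcal{A}$ computing the dependency graph, and maintain a set $\mathcal{F}$ of games still consistent with all answers given so far, initially $\mathcal{F}=\{v_T\}$. Each query $v(S)$ is answered by the majority value among games in $\mathcal{F}$, which removes at most half of $\mathcal{F}$; if moreover the construction is such that queries \emph{outside} the size-$n/2$ layer have the same answer for all $v_T$ (so they remove nothing), then only queries inside the layer are informative, and each such query splits $\mathcal{F}$ only according to whether $S=T$ for the single $T$ it names, removing at most one game. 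Hence after fewer than $\binom{n}{n/2}$ queries at least two games $v_{T},v_{T'}$ with $T\ne T'$ remain, $\mathcal{A}$ must output the same graph on both, but their dependency graphs differ --- contradiction. Therefore every algorithm must make at least $\binom{n}{n/2}$ queries in the worst case.

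I would finish by double-checking the two load-bearing facts for whichever concrete family I settle on: that each $v_T$ is monotone with values in $\{0,1\}$ (hence a simple game, so Theorem~\ref{thm:MWC} applies), and that the map $T\mapsto G_{v_T}$ is injective while $v_T(S)$ is independent of $T$ for every $S$ not in the designated hard layer. With the family engineered so each informative query pins down at most one candidate, the counting in the previous paragraph gives exactly the bound $\binom{n}{n/2}$, matching the statement.
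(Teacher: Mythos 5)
There is a genuine gap: the part you defer as ``the main obstacle'' --- constructing a family of simple games that agree everywhere outside an exponentially large layer yet have pairwise distinct dependency graphs --- is precisely the content of the theorem, and neither of your attempted constructions supplies it. Your first family (all sets of size $\ge n/2+1$ winning, plus one extra minimal winning coalition $T$) fails for the reason you yourself note: the dependency graph is $K_n$ for every $T$. Your second sketch is not repairable as stated: with $|T|=n/2$ and $a,b\notin T$, the sets $T\cup\{a\}$ and $T\cup\{b\}$ already have size $n/2+1$, and declaring $T\cup\{a,b\}$ (a superset of a winning coalition) losing violates monotonicity, so the resulting $v_T$ is not a simple game and Theorem~\ref{thm:MWC} no longer applies. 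The adversary wrapper you describe is indeed routine (modulo a small off-by-one: with only $\binom{n}{n/2}$ candidate games, each killed by one query, you get a lower bound of $\binom{n}{n/2}-1$; you need a baseline game in the family as well), but without a concrete verified family the proof does not go through.

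The paper's construction resolves exactly this obstacle, and by a different trick than edge-toggling among the original players: it adds one auxiliary player $a$ and encodes the hidden set in whether $a$ is a dummy. For each $H\subseteq[n]$ with $|H|=n/2$, the game $\calG_H$ makes a coalition winning iff it contains at least $n/2$ players of $X=[n]$ and is not equal to $H$, or equals $H\cup\{a\}$; the baseline $\calG_0$ makes $S$ winning iff $|S\cap X|\ge n/2$. Then $a$ is a pivot only for $H$ in $\calG_H$ and a dummy in $\calG_0$, so the dependency graphs differ ($a$ isolated vs.\ not), while $\calG_H$ and $\calG_0$ differ only at the single coalition $H$. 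Hence fewer than $\binom{n}{n/2}$ queries leave some $H$ unqueried, the adversary answers consistently with both games, and the algorithm cannot determine the graph. The lesson: rather than trying to make non-edges among the original players carry the information (which collides with the fact that near-majority games have complete dependency graphs), make the hidden set \emph{losing} and let an extra player's isolation in the dependency graph be the distinguishing feature.
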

\begin{proof}
We will describe a class of simple games where $\binom{n}{n/2}$ value queries are required 
to decide whether a player is a dummy.

Let $n$ be an even number, and set $X=[n]$. We introduce one new player $a \not \in X$.
For each subset $H\subseteq [n]$ with $|H|=n/2$, we construct a simple game $\calG_H=(N, v_H)$.
The player set in this game is $N=X\cup \{a\}$, and $v_H(S)=1$ if and only if 
$|S\cap X|\ge \frac{n}{2}$ and $S\neq H$ or if $S=H\cup\{a\}$.
Notice that player $a$ is not a pivot for any coalition in $\calG_H$ except for $H$.
We also define a game $\calG_0=(N, v_0)$, where $N=X\cup\{a\}$ and $v_0(S)=1$
if and only if $|S\cap X|\ge \frac{n}{2}$. Clearly, $a$ is a dummy in $\calG_0$.

Consider any algorithm that constructs a dependency graph using value queries. 
If it asks fewer than $\binom{n}{n/2}$ queries, then there is a coalition
$H\subseteq X$ with $|H|=n/2$ whose value has not been queried.
Now suppose that whenever the algorithm queries the value of a coalition
of size $n/2$, the answer is $1$. Then at the end the algorithm is unable to distinguish
between $\calG_H$ and $\calG_0$, and hence it is unable to decide whether $a$ is a dummy.
\end{proof}

\section{Conclusions}
In this work, we have demonstrated that the concepts of dependency degree and supermodular degree
are useful in the analysis on cooperative games: we have obtained FPT results with respect 
to these parameters for a number of solution concepts in cooperative game theory. We have also
explored the limitations of this approach, proving hardness results for games where
these parameters are bounded by a constant. In the future, it would be interesting
to extend this line of work to other solution concepts, such as the nucleolus, the kernel, 
or the bargaining set, or to other succinctly representable classes of cooperative games.

\section*{Acknowledgements}
This work was supported by the European Research Council (ERC) under grant number 639945 (ACCORD).

%------- Reference ------------------------
\bibliographystyle{aaai}
%\bibliography{dependency}

\end{document}